\theoremstyle{plain}
\theoremstyle{remark}
\newtheorem{remark}{Remark}[section]
\newtheorem{proposition}{Proposition}[section]
\newtheorem{corollary}{Corollary}[section]
\title{A theoretical study of the role of astrocyte activity in neuronal hyperexcitability using a new neuro-glial mass model}
\author{Aurélie Garnier\,\footnote{Sorbonne Universités, UPMC Univ Paris 06, CNRS, INSERM, Laboratoire d'Imagerie Biomédicale (LIB), F-75013 Paris, France}\, \footnotemark[3] \,, Alexandre Vidal \footnote{Laboratoire de Mathématiques et Modélisation d'Évry (LaMME), CNRS UMR 8071, Université d'Évry-Val-d'Essonne, F-91000 Évry, France}\,\,, Habib Benali \footnotemark[1]}
\begin{document}

\maketitle

\renewcommand{\thefootnote}{\fnsymbol{footnote}}
\footnotetext[3]{aurelie.garnier@lib.upmc.fr}

\vspace{-0.5cm}

\begin{abstract}
The investigation of the neuronal environment allows us to better understand the activity of a cerebral region as a whole. The recent experimental evidences of the presence of transporters for glutamate and GABA in both neuronal and astrocyte compartments raise the question of the functional importance of the astrocytes in the regulation of the neuronal activity.  We propose a new computational model at the mesoscopic scale embedding the recent knowledge on the physiology of neuron and astrocyte coupled activities. The neural compartment is a neural mass model with double excitatory feedback, and the glial compartment focus on the dynamics of glutamate and GABA concentrations. Using the proposed model, we first study the impact of a deficiency in the reuptake of GABA by astrocytes, which implies an increase in GABA concentration in the extracellular space. A decrease in the frequency of neural activity is observed and explained from the dynamics analysis. Second, we investigate the neuronal response to a deficiency in the reuptake of Glutamate by the astrocytes. In this case, we identify three behaviors : the neural activity may either be reduced, or enhanced or, alternatively, may experience a transient of high activity before stabilizing around a new activity regime with a frequency close to the nominal one. After translating theoretically  the neuronal excitability modulation using the bifurcation structure of the neural mass model, we state the conditions on the glial feedback parameters corresponding to each behavior. \\

\noindent {\bf Keywords:} Model in Neuroscience, Qualitative analysis of dynamical systems, Bifurcations, Neuro-glial interactions, GABAergic and glutamatergic neurotransmissions, Excitability modulation, Neuronal hyperexcitability \\

\noindent {\bf AMS classification:} 
 34C15, 
 34C23, 
 34C28, 
 34C46, 
 34C60, 
 34H20, 
 92C20, 
 92B25. 
\end{abstract}

\section{Introduction}

For several years, to understand the mechanisms of the cerebral metabolism has become an important issue in neuroscience. The investigation of the neural environment allows us to better understand the activity of a cerebral region as a whole. For instance, the neural activity is composed of an interplay between excitation and inhibition where the cerebral blood flow (CBF) dynamics is an essential element as it reflects nutriments supplies such as oxygen and glucose. Synaptic transmission \cite{Araque_1998, Araque_1999, Halassa_2007, Nadkarni_2008, Zhang_2003} and neural activity \cite{Giaume_2010, Halassa_2010, Parpura_2000} are regulated by neurotransmitters, ions and molecules. Broadly speaking, at the microscopic scale, the presynaptic neuron releases neurotransmitters in the synaptic cleft which may bind to postsynaptic neuron receptors. These receptors, when ``activated'' by neurotransmitters, trigger ions exchanges between the extracellular space and the postsynaptic neuron that can induce the activation of the postsynaptic neuron. Parallely, neurotransmitter in the extracellular space may bind to astrocytic receptors and transporters inducing the activation of the glial calcic cycle allowing to release glutamate in the extra-synaptic space.

The demonstration of the presence of transporters for glutamate and GABA in both neural and astrocyte compartments raises the question of the functional importance of the astrocytes in the regulation of the neural activity \cite{Wang_2008}. It has been shown that glial cells and particularly astrocytes have a great impact on both the metabolic regulation \cite{Schousboe_2013} and the CBF dynamics \cite{Kowianski_2013}. Indeed, astrocytes modulate the dynamics of neurotransmitter concentrations, and thus the neuronal excitability, the synaptic transmission and the neural activity. Consequently, astrocytes dysfunctions are involved in several brain pathologies \cite{Losi_2012, Pittenger_2011, Seifert_2006}. Studying the interactions between the neurons and astrocytes cells has therefore become an essential problem in neurophysiology and biophysics.

In this context, computational models provide a key tool for interpreting the observed electrophysiological data and for revealing the different (patho-)physiological mechanisms which may underlie the observational data. Several models including the metabolic regulation mechanisms have been proposed in the literature. Some authors have built models of tripartite synapse \cite{Nadkarni_2007, Postnov_2009, Postnov_2007, Volman_2012, Volman_2007} considering an astrocyte coupled with a presynaptic neuron and a postsynaptic one and the dynamics of neurotransmitters, ions and molecules. Other authors considered a single neuron coupled with an astrocyte \cite{DePitta_2011, Gruetter_2001, Silchenko_2008, Volman_2012} and sometimes with a hemodynamic compartment \cite{Aubert_2002, Aubert_2005b, Aubert_2005} and the dynamics of neurotransmitters, ions and molecules between and in these elements. Recently, a new mesoscopic computational model \cite{Blanchard_2015} focusing on the astrocyte dynamics has been proposed. This model links the neural activity measured by local field potentials (LFP) to the CBF dynamics measured by Laser Doppler (LD) recordings through glial activity. The model incorporates the astrocyte cells via their role in neurotransmitters (glutamate and GABA) recycling, with physiologically-relevant relationships between these variables.

In this article, we propose a model extending the one presented in \cite{Blanchard_2015} for studying the neuro-glial interactions and more particularly the impact that the astrocyte activity may have upon the neuronal one. We use the same organization embedding a neural compartment and a glial one. To reproduce the mesoscopic neural activity, we use the neural mass model (NMM) with double excitatory feedback presented in \cite{Garnier_2015} which generalizes the Jansen-Rit model used in \cite{Blanchard_2015}. We keep the glutamate and GABA dynamics presented in \cite{Blanchard_2015} for the glial compartment. In this model, the neural activity acts on glutamate and GABA dynamics through the pyramidal and interneuronal firing rates respectively. The essential extension in our model consists in embedding the influence of the glial dynamics upon neural activity through the glutamate and GABA extracellular concentrations. Indeed, physiologically, a pyramidal cell (resp. an interneuron) releases glutamate (resp. GABA) in the synaptic cleft from where it binds to receptors on the postsynaptic neuron and the astrocyte. Reuptake processes (referred to as ``reuptakes'' in this article) of the neurotransmitters by the local astrocyte and presynaptic neuron regulates their concentration in the extracellular space (Figure \ref{Schema_NG_physio}). In the presynaptic neuron, the reuptake completes the stock whereas the reuptake by the astrocyte triggers a cascade of reactions linked with the modulation of synaptic transmission (differentially according to the type of neurotransmitters) and the hemodynamics. Hence, we introduce a feedback coupling from the glial compartment upon the neural one in our model. This feedback, referred to as the ``glial feedback'' in this article, allows us to study how different astrocyte deficiencies impact the local neuro-glial activities. If the mechanism of glutamate or GABA reuptake by the astrocytes is deficient, the neurotransmitter accumulates in the synaptic cleft, which leads to an increase in its concentration in the extracellular space. When this concentration reaches a threshold the synaptic transmission to postsynaptic neuron becomes abnormal and the postsynaptic neuron excitability threshold changes.

\begin{figure}[htbp]
 \centering
 \includegraphics[width=0.9\textwidth]{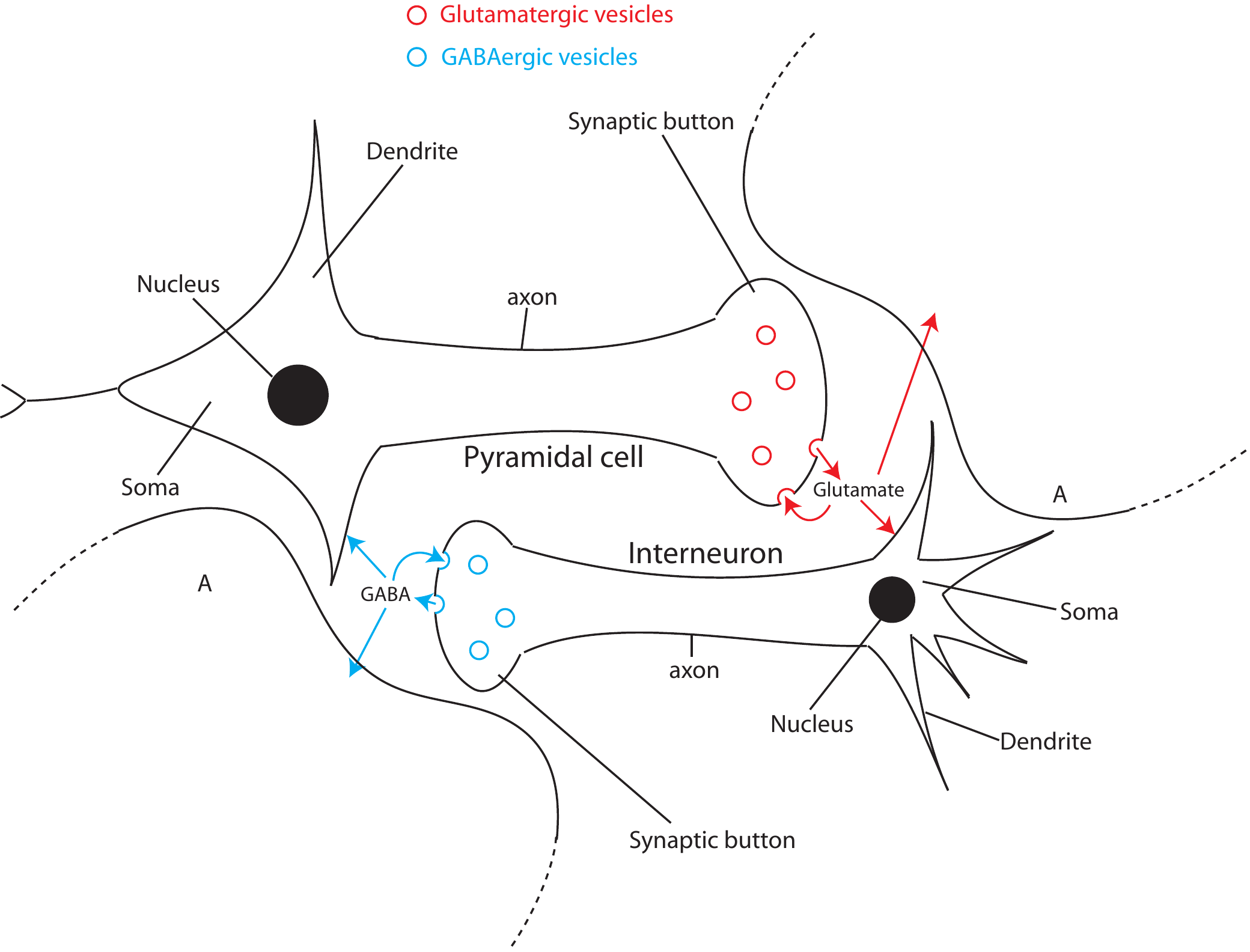}
 \caption{Scheme of neurotransmission mechanisms and neurotransmitter reuptake. Red circles: glutamatergic vesicles. Blue circles: GABAergic vesicles. Red arrows: exchanges of glutamate. Blue arrows: exchanges of GABA. A: astrocytes.}
 \label{Schema_NG_physio}
\end{figure}

The proposed model provides a unified framework in which knowledge of the physiology of neuron and astrocyte activities, as well as their couplings, can be incorporated, in order (i) to simulate the output signals for chosen parameter values, (ii) to identify the various qualitative responses of the whole system to physiological disorders, (iii) to study theoretically the conditions over the parameters corresponding to each type of response. The neural mass approach offers therefore an optimal compromise between the compactness of the dynamics, the richness of the physio-pathological mechanisms that can be reproduced, and the interpretability of the parameters from the biophysical viewpoint. The range of parameter values can be inspired from physiology-based studies in actual animal models. Once the parameter range is defined, one can test the influence of varying a reduced set of parameters (for example, ratio between excitation and inhibition or Glutamate and GABA reuptake) on the output signals.

The paper is organized as follows. In section 2, we discuss the main pathways of the neuro-glial interactions. We recall the dynamical features that explain the qualitative and quantitative properties of the time series generated by each compartment. In particular, we describe the bifurcation structure of the neural compartment underlying the generation of Noise Induced Spiking (NIS) outputs on which we focus in this article \cite{Garnier_2015}. We introduce the bilaterally coupled model and illustrate the main difference with the feedforward model by mimicking the injection of a GABA bolus in the extracellular space. In section 3, we study theoretically the effect of a deficiency of the GABA reuptake by the astrocyte upon the neural system behavior, we illustrate the result by numerical simulations and link the observed outputs with known biological results. In section 4, we identify and illustrate numerically the three possible responses of the neural compartment to a deficiency of the glutamate reuptake by the astrocyte : reduced activity, transient and permanent hyperexcitability. We explain this spectrum of responses using the analysis of the dynamical structure of the model and we derive explicit conditions on the parameters involved in the glial feedback corresponding to each type of behaviors. Finally we interpret the conditions on the parameters in terms of physio-pathological features and discuss possible applications of this study for investigating experimentally the precise role of astrocyte deficiencies in neuronal hyperexcitability.

\section{Neuro-glial mass approach : bilateral coupling of mesoscopic models}
\label{NG_pres}

In this section, we briefly recall the structure of the NMM generalizing the Jansen-Rit model and studied in \cite{Garnier_2015}, the way to analyze its properties and the predominant time series pattern that it generates. Then we describe the model introduced in \cite{Blanchard_2015} to reproduce the glial dynamics. Then we explain our choice of bilateral coupling between these two compartments. Finally, we illustrate its main dynamical differences with the feedforward coupling system ({\it i.e.} unilateral coupling from the neural compartment upon the glial one).

\subsection{Neural mass model and Noise-Induced Spiking} \label{ssec_NMM}
The NMM represents the dynamical interactions between two neural populations at a mesoscopic scale: a main population of pyramidal cells (P) and a population of inhibitory interneurons (I). It also involves the interactions of P with a general population P' representing neighboring pyramidal cells and interacting with P through synaptic connections. There are three feedback loops on population P activity: an inhibitory feedback through the interneuron population I, a direct excitatory feedback of P onto itself (referred to as ``direct feedback'') and an indirect excitatory feedback (referred to as ``indirect feedback'') involving the population P' (Figure \ref{NMM}(a)).

\begin{figure}[htbp]
 \centering
 \includegraphics[width=0.65\textwidth]{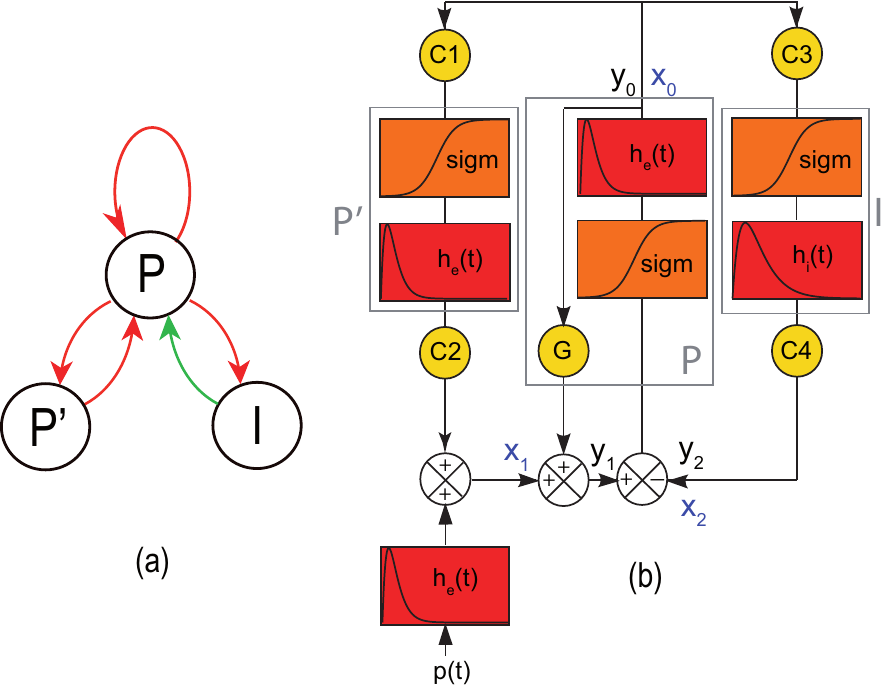}
 \caption{Two schematic representations of the NMM with double excitatory feedbacks. P: main population of pyramidal cells. I: Interneuron population. P': secondary population of pyramidal cells. Red (resp. green) arrows in (a): excitatory (resp. inhibitory) interactions. Box $h_e(t)$ (resp. $h_i(t)$): second order process converting action potentials into excitatory (resp. inhibitory) post-synaptic potential. Box ${\rm sigm}$: process converting average membrane potential into average action potential density discharge by neurons of populations P, P' and I respectively. $C_i$ for $i \in \text{\textlbrackdbl} 1,4 \text{\textrbrackdbl}$: coupling gain parameters depending on the maximal number $C$ of synaptic connections between two populations. $G$: direct feedback coupling gain. $p(t)$: excitatory input. $y_0$, $y_1$, $y_2$: state variables. $x_0$, $x_1$, $x_2$: intermediary variables.}
 \label{NMM}
\end{figure}

The conversion process of average pulse density into excitatory and inhibitory postsynaptic potential respectively are based on the following functions introduced by Van Rotterdam {\em et al.} \cite{VanRotterdam_1982}:
\begin{align*}
 h_e(t) &= A\,a\,t\,e^{-a\,t}, \\
 h_i(t) &= B\,b\,t\,e^{-b\,t}
\end{align*}
These functions are the basic solutions of the differential operators ${\cal F}_e$ and ${\cal F}_i$ respectively:
\begin{subequations} \label{Transfer_Dyn}
\begin{eqnarray}
 {\cal F}_e(h_e) &= \frac{1}{A} \left(\frac{1}{a}\,h_e''-2\,h_e'-a\,h_e \right) \\
 {\cal F}_i(h_e) &= \frac{1}{B} \left(\frac{1}{b}\,h_i''-2\,h_i'-b\,h_i \right)
  \end{eqnarray}
\end{subequations}
In this framework, parameter $A$ (resp. $B$) tunes the amplitude of excitatory (resp. inhibitory) postsynaptic potentials and $\frac{1}{a}$ (resp. $\frac{1}{b}$) represents the time constant of excitatory (resp. inhibitory) postsynaptic potentials representative of the kinetics of synaptic connections and delays introduced by circuitry of the dendritic tree \cite{Freeman_1975, VanRotterdam_1982, Jansen_1993}. Following Freeman's work \cite{Freeman_1975}, the sigmoidal functions converting the average membrane potential into an average pulse density have the following form:
\[
 {\rm sigm}(x,v) = \frac{2\,e_0}{1+e^{r\,(v-x)}}
\]
where $2\,e_0$ represents the maximum discharge rate, $v$ the excitability threshold and $r$ the sigmoid slope at the inflection point. Finally the NMM receives an excitatory input $p(t)$ standing for the action on population P of neural populations in other areas through long-range synaptic connections. Classically one consider $p(t)$ a gaussian variable to represent a non-specific input and generate the model outputs.

Now we can write the dynamics for the intermediary variables $x_0$, $x_1$ and $x_2$ which represent the outputs of the population P, the population P' and the population I respectively (Figure \ref{NMM}(b)):
\begin{subequations} \label{x_init}
\begin{align}
 x_0'' &= A\,a\,{\rm sigm}(x_1+G\,x_0-x_2,v_{\rm P})-2\,a\,x_0'-a^2\,x_0 \label{x0_init} \\
 x_1'' &= A\,a\,C_2\,{\rm sigm}(C_1\,x_0,v_{\rm P'})-2\,a\,x_1'-a^2\,x_1+A\,a\,p(t) \label{x1_init} \\
 x_2'' &= B\,b\,C_4\,{\rm sigm}(C_3\,x_0,v_{\rm I})-2\,b\,x_2'-b^2\,x_2 \label{x2_init}
\end{align}
\end{subequations}
Parameters $C_i$, $i\in  \text{\textlbrackdbl} 1,4 \text{\textrbrackdbl}$, represent the average number of synapses between two populations. Following \cite{Braitenberg_1998}, each $C_i$ is proportional to the maximal number $C$ of synapses between two populations. The excitation of P by its own output, resulting from the intra-population synaptic connections, is weighted by the coupling gain $G$.

For sake of comparison, we use a variable change to obtain the same state variables as in the Jansen-Rit model \cite{Jansen_1995} : the excitatory output ($y_0=x_0$) and the excitatory ($y_1=x_1+G\,x_0$) and inhibitory ($y_2=x_2$) inputs of the main population P. The output $y_0$ acts on the secondary pyramidal cell population P' and on the interneuron population I.
To analyze the model, we write the dynamics of the state variables $y_0$, $y_1$ and $y_2$ as a system of first order differential equations:
\begin{subequations} \label{y_init}
\begin{align}
 y_0' &= y_3 \\
 y_1' &= y_4 \\
 y_2' &= y_5 \\
 y_3' &= A\,a\,{\rm sigm}(y_1-y_2,v_{\rm P})-2\,a\,y_3-a^2\,y_0 \label{y0_init} \\
 y_4' &= A\,a\,C_2\,{\rm sigm}(C_1\,y_0,v_{\rm P'})+A\,a\,G\,{\rm sigm}(y_1-y_2,v_{\rm P})-2\,a\,y_4-a^2\,y_1+A\,a\,p(t) \label{y1_init} \\
 y_5' &= B\,b\,C_4\,{\rm sigm}(C_3\,y_0,v_{\rm I})-2\,b\,y_5-b^2\,y_2 \label{y2_init}
\end{align}
\end{subequations}
In this article, we consider the local field potential (LFP) as the main model output. Following \cite{Jansen_1993}, it is defined by ${\rm LFP}(t)=y_1(t)-y_2(t)$. It is important to note that, generally, studies of neural mass models, such as Jansen-Rit model, only considered the case with the same constant excitability thresholds for all populations, {\it i.e.}
\[
 v_{\rm P}=v_{\rm P'}=v_{\rm I}=v_0
\]

The behavior of NMMs can be deduced from the bifurcation diagram according to the value $p(t)=p$ considered as a parameter, as it has been performed in \cite{Touboul_2011} on the Jansen-Rit model. In \cite{Garnier_2015}, we have classified the types of time series patterns generated by model \eqref{y_init} and the associated bifurcation structures according to the strengths of the different excitative feedbacks applied to population $P$.
Let us recall the bifurcation diagram underlying the predominant type of generated time series, which we will consider in this article.

Model \eqref{y_init} has the following useful particularities that have been highlighted in \cite{Garnier_2015}. First, for a fixed value of parameter $p$, the $y_0$ value of a singular point suffices to have explicit expressions of all the other components. Second, for a given $y_0$ value, there exists a unique value of $p$ such that $y_0$ corresponds to a singular point. In other terms, the set of singular points obtained for the different values of $p$ is a graph over $y_0$. Hence, we can visualize the shape of the singular point locus in the plane $(p,y_0)$ : in the case presented here (see Figure \ref{NIS_DB}), this curve of singular points is S-shaped. In the following description, for a given bifurcation ``bif'' according to parameter $p$, we note $p_{bif}$ the bifurcation value and, if the bifurcation involves a singular point, we note $y_{bif}$ the corresponding $y_0$ value.

\begin{figure}[htbp]
 \centering
 \includegraphics[width=0.95\textwidth]{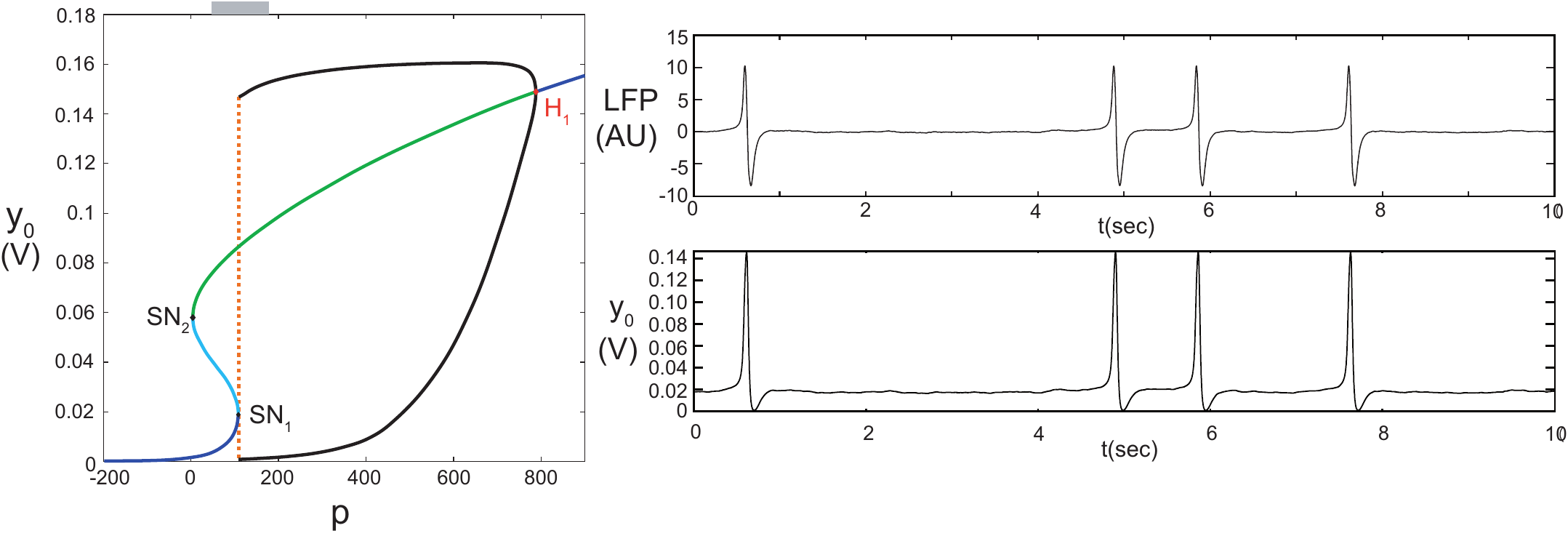}
 \caption{Bifurcation diagram according to $p$ (left) and associated LFP and $y_0$ time series (right). Blue curves: stable singular points. Cyan (resp. green) curves: singular points with one (resp. two) eigenvalues with positive real parts. Black curves: $y_0$ extrema along stable limit cycles. Black points (SN$_1$ and SN$_2$): saddle-node bifurcations. Red point (H$_1$): supercritical Hopf bifurcation. Dashed orange line: Saddle-Node on Invariant Circle (SNIC) bifurcation. Horizontal gray bar: confidence interval $[<p>-\sigma , <p>+\sigma ]$ of the gaussian variable $p(t)$ used to generate the time series.}
 \label{NIS_DB}
\end{figure}

Two saddle-node bifurcations SN$_1$ and SN$_2$ split the curve of singular points into three branches. We name ``lower branch'', ``middle branch'' and ``upper branch'' the sets of singular points satisfying $y_0<y_{\rm SN_1}$, $y_{\rm SN_1}<y_0<y_{\rm SN_2}$ and $y_0>y_{\rm SN_2}$ respectively. Singular points on the lower branch are stable (blue) and those on the middle branch are unstable (cyan). Singular points on the upper branch are unstable (green) for $p<p_{\rm H_1}$ and stable (blue) otherwise. At $p=p_{\rm H_1}$ the system undergoes a supercritical Hopf bifurcation H$_1$ giving birth to a stable limit cycle for $p<p_{\rm H_1}$ that persists until $p=p_{\rm SN_1}=p_{\rm SNIC}$ where it disappears by a saddle-node on invariant circle (SNIC) bifurcation (dashed orange line). The existence of the SNIC bifurcation is essential because it implies the appearance of a large amplitude stable limit cycle with  large period. Thereby, according to the value of $p$, the system alternates between oscillatory phases (for $p>p_{\rm SNIC}$) and quiescent phases (for $p<p_{\rm SNIC}$). In other terms, the value $p_{\rm SNIC}$ plays the role of an activation threshold for the neural compartment, which is a key point of the subsequent analysis.

Note that the oscillation frequency in the generated oscillatory pattern is driven by the value of $p$: as $p$ tends to $p_{\rm SNIC}$ from above, the limit cycle period tends to infinity. Hence, the closest $p$ is to $p_{\rm SNIC}$, the lowest is the frequency. Consequently, when considering a gaussian input for the model, the occurrence of spikes and their frequency depend on the features of the normal distribution generating $p(t)$, which led us to refer to the corresponding pattern as Noise-Induced Spiking (NIS) in \cite{Garnier_2015}.

Such LFP activity, {\it i.e.} sparse large amplitude spikes, corresponds to episodic synchronization of the neuron activities among the populations. Physiologically, this pattern of activity arise, as for interictal spiking activity, and is symptomatic of a strong excitability of the neuronal system that can turn into hyperexcitability during pathological crisis. For fixed parameters, the activity is stable in the sense that the oscillation frequency does not change much along time. In the following, we study the variations of the activity when the neural dynamics are altered by the surrounding activity, {\it i.e.} the glial feedback.

\subsection{Glial model : glutamate and GABA concentration dynamics}

For reproducing the glial activity, we use the model introduced in \cite{Blanchard_2015}. It focuses on the dynamics of glutamate and GABA concentrations, which are the main neurotransmitters of the central nervous system. In \cite{Blanchard_2015}, the neuro-glial coupling is feedforward: the glial dynamics is driven by the neural activity, generated by the Jansen-Rit model, but does not impact the neural compartment. The model considers the dynamics of glutamate and GABA concentrations, locally to the main population P of pyramidal cells, at different stages of the recycling mechanism. The local nature of this interaction implies that the firing rate of the secondary population P' of pyramidal cells does not impact the glial dynamics associated with the neighboring astrocytes of the main population P. The mechanism is as follows (Figure \ref{NG_sans_retroaction}): excited pyramidal cells (resp. interneurons) release glutamate (resp. GABA) in the extracellular space (synaptic cleft). Astrocytes and pre-synaptic neurons reuptake the neurotransmitters. Astrocytes recycle or consume the neurotransmitters while the presynaptic neurons capture them to complete their stock.

\begin{figure}[ht]
 \centering
 \includegraphics[width=0.6\textwidth]{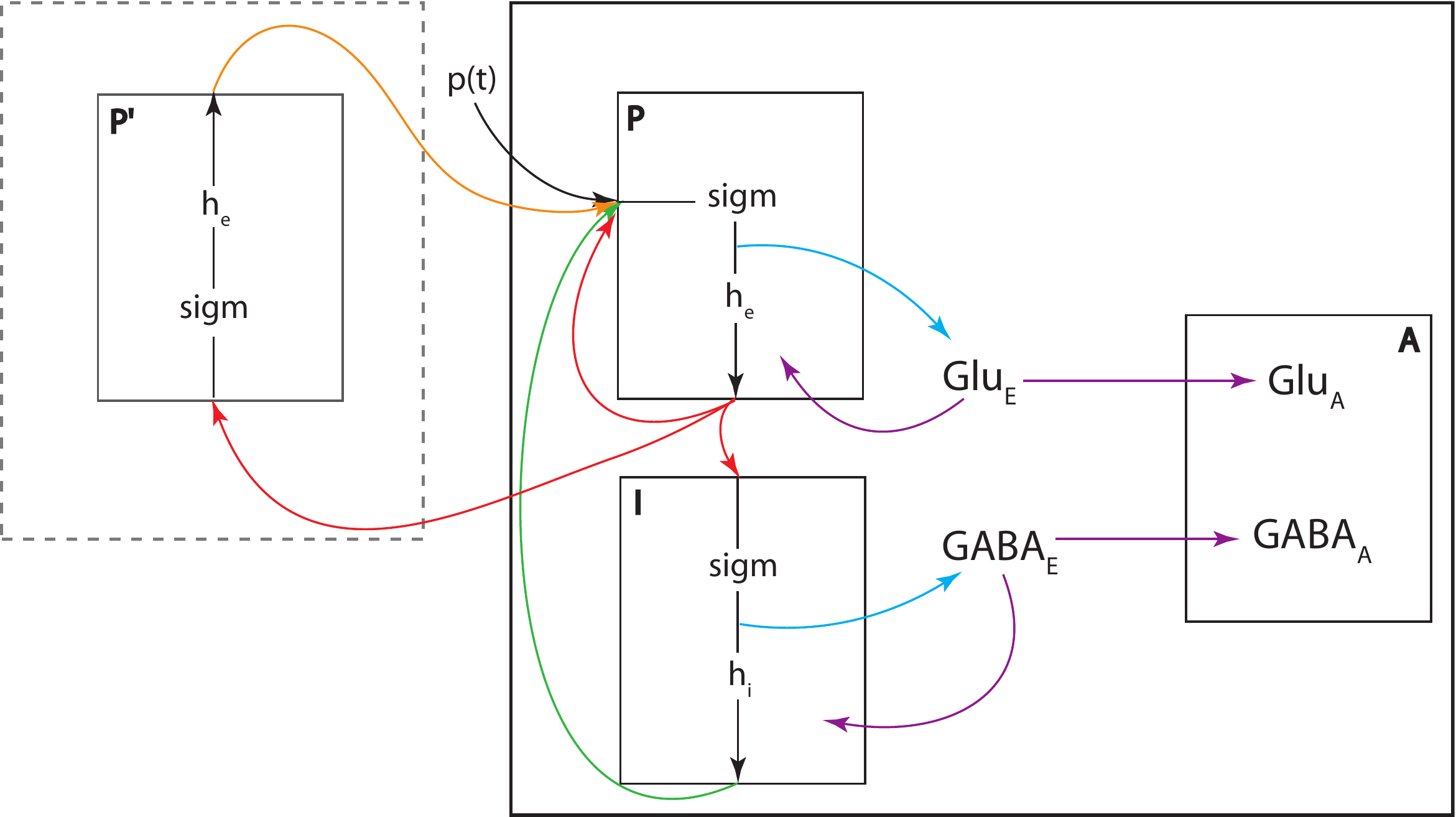}
 \caption{Scheme of the feedforward neuro-glial mass model. P and P': main and secondary populations of pyramidal cells. I: interneuron population. $p(t)$: excitatory input on population P. ${\rm [Glu]_{E}}$ and ${\rm [GABA]_{E}}$: glutamate and GABA extracellular concentration. ${\rm [Glu]_{A}}$ and ${\rm [GABA]_{A}}$: glutamate and GABA glial concentrations. Red arrows: P$\rightarrow$P, P$\rightarrow$I and P$\rightarrow$P' couplings. Orange arrow: P'$\rightarrow$P coupling. Green arrow: I$\rightarrow$P coupling. Cyan arrows: glutamate and GABA release by populations P and I into extracellular space. Purple arrows: glial and neural reuptakes of neurotransmitters.}
 \label{NG_sans_retroaction}
\end{figure}

Following \cite{Blanchard_2015} the glial compartment is built on the firing rate (${\rm FR}_{pyr}$) of the pyramidal cell population and the firing rate (${\rm FR}_{int}$) of the interneuron population. The state variables are
\begin{itemize}
\item ${\rm [Glu]_{NE}}$ and ${\rm [GABA]_{IE}}$ : the fluxes of glutamate and GABA from neurons to extracellular space,
\item  ${\rm [Glu]_{E}}$ and ${\rm [GABA]_{E}}$ : the neurotransmitter concentrations in the extracellular space,
\item  ${\rm [Glu]_{A}}$ and ${\rm [GABA]_{A}}$ : the quantity of neurotransmitters recycled and consumed by the astrocytes.
\end{itemize}
Naturally, the dynamics governing ${\rm [Glu]_{NE}}$ and ${\rm [GABA]_{IE}}$ are driven by second-order differential operators similar to the synaptic transfer dynamics introduced in \eqref{Transfer_Dyn} \cite{MArdekani_2013, VanRotterdam_1982}: 
\begin{align*}
 {\cal F}_{\rm Glu}(h_{\rm Glu}) &= \frac{1}{W}\left( \frac{1}{w_1}\,h_{\rm Glu}''-\frac{w_1+w_2}{w_1}\,h_{\rm Glu}'-w_2\,h_{\rm Glu} \right) \\
 {\cal F}_{\rm GABA}(h_{\rm GABA}) &= \frac{1}{Z} \left(\frac{1}{z_1}\,h_{\rm GABA}''-\frac{z_1+z_2}{z_1}\,h_{\rm GABA}'-z_2\,h_{\rm GABA} \right)
\end{align*}

As for the synaptic transfer functions, parameter $W$ (resp. $Z$) tunes the peak amplitude of glutamate (resp. GABA) concentrations and parameters $w_1$ and $w_2$ (resp. $z_1$ and $z_2$) tune the rise and decay times of glutamate (resp. GABA) release transfer function. These dynamics are well-suited for reproducing the qualitative and quantitative properties of rise and decay in neurotransmitter concentrations. 

The reuptakes of glutamate from the extracellular space by astrocyte (${\rm [Glu]_{EA}}$) and neurons (${\rm [Glu]_{EN}}$) are triggered when extracellular concentration of glutamate reaches a threshold. Moreover the efficiencies of these processes saturate for high concentrations values, which leads to model these dynamics using sigmoidal functions. GABA reuptakes (${\rm [GABA]_{EA}}$ and ${\rm [GABA]_{EI}}$) are modeled with Michaelis-Menten dynamics following the experimental literature \cite{Blanchard_2015}. The dynamics of the extracellular concentrations (${\rm [Glu]_{E}}$ and ${\rm [GABA]_{E}}$) are derived from the input and output fluxes described above. The astrocyte concentration dynamics (${\rm [Glu]_{A}}$ and ${\rm [GABA]_{A}}$) result from the glial reuptake ones and a linear consumption term.

In sigmoidal functions for ${\rm [Glu]_{E}}$ and ${\rm [GABA]_{E}}$, parameters $V_{\rm glu}^{\rm EA}$ and $V_{\rm glu}^{\rm EN}$ are the maximal velocities for glutamate reuptakes by the neurons and the astrocytes respectively, $s_g$ represents the activation threshold and $r_g$ the sigmoidal slope at the inflection point. Parameters $V_{\rm gba}^{\rm EA}$ and $K_{\rm gba}^{\rm EA}$ (resp. $V_{\rm gba}^{\rm EN}$ and $K_{\rm gba}^{\rm EN}$) are respectively the maximal velocity and concentration for glial (resp. neural) GABA transporter. Finally, $V_{\rm cglu}$ and $V_{\rm cgba}$ are the glutamate and GABA degradation rates in astrocytes. We refer the reader to \cite{Blanchard_2015} for a detailed explanation of the dynamics.

Hence, the feedforward model obtained by coupling the NMM defined by \eqref{y_init} and the glial dynamics introduced in \cite{Blanchard_2015} reads
\begin{subequations} \label{Sys_NG_SR}
\begin{align}
 y_0' &= y_3 \label{xx0}\\
 y_1' &= y_4 \\
 y_2' &= y_5 \\
 y_3' &= A\,a\,{\rm sigm}(y_1-y_2,v_{\rm P})-2\,a\,y_3-a^2\,y_0 \label{x0} \\
 y_4' &= A\,a\,C_2\,{\rm sigm}(C_1\,y_0,v_{\rm P'})+A\,a\,G\,{\rm sigm}(y_1-y_2,v_{\rm P})-2\,a\,y_4-a^2\,y_1+A\,a\,p(t) \label{x1} \\
 y_5' &= B\,b\,C_4\,{\rm sigm}(C_3\,y_0,v_{\rm I})-2\,b\,y_5-b^2\,y_2 \label{x2} \\
 {\rm [Glu]_{NE}}' &= {\rm d[Glu]_{NE}} \\
 {\rm d[Glu]_{NE}}' &= W\,w_1\,{\rm sigm}(y_1-y_2,v_{\rm P})-(w_1+w_2)\,{\rm d[Glu]_{NE}}-w_1\,w_2\,{\rm [Glu]_{NE}} \label{glune} \\
 {\rm [Glu]_{E}}' &= {\rm [Glu]_{NE}}-\frac{V_{\rm glu}^{\rm EA}}{1+e^{r_g\,s_g-r_g\,{\rm [Glu]_E}}}-\frac{V_{\rm glu}^{\rm EN}}{1+e^{r_g\,s_g-r_g\,{\rm [Glu]_E}}} \\
 {\rm [Glu]_A}' &= \frac{V_{\rm glu}^{\rm EA}}{1+e^{r_g\,s_g-r_g\,{\rm [Glu]_E}}}-V_{\rm cglu}\,{\rm [Glu]_A} \\
 {\rm [GABA]_{IE}}' &= {\rm d[GABA]_{IE}} \\
 {\rm d[GABA]_{IE}}' &= Z\,z_1\,{\rm sigm}(C_3\,y_0,v_{\rm I})-(z_1+z_2)\,{\rm d[GABA]_{IE}}-z_1\,z_2\,{\rm [GABA]_{IE}} \label{gabaie} \\
 {\rm [GABA]_E}' &= {\rm [GABA]_{IE}}-\frac{V_{\rm gba}^{\rm EA}}{K_{\rm gba}^{\rm EA}+{\rm [GABA]_E}}\,{\rm [GABA]_E}-\frac{V_{\rm gba}^{\rm EN}}{K_{\rm gba}^{\rm EN}+{\rm [GABA]_E}}\,{\rm [GABA]_E}\\
 {\rm [GABA]_A}' &= \frac{V_{\rm gba}^{\rm EA}}{K_{\rm gba}^{\rm EA}+{\rm [GABA]_E}}\,{\rm [GABA]_E}-V_{\rm cgba}\,{\rm [GABA]_A}
\end{align}
\end{subequations}

Table \ref{par_table} specifies the parameter values used for the simulations in the following sections. The values of parameters associated with the NMM have been chosen to reproduce NIS behavior using the analysis in \cite{Garnier_2015}.

\begin{table}[htb]
 \centering
 \begin{tabular}{|l|l|l|l|}
  \hline
  \multicolumn{2}{|c|}{Neurons} & \multicolumn{1}{c|}{Glutamate} & \multicolumn{1}{c|}{GABA} \\
  \hline
  $A=3.25\,{\rm mV}$ & $C=135$ & $W=53.6\,\mu {\rm M.s^{-1}}$ & $Z=53.6\,\mu {\rm M.s^{-1}}$ \\
  $B=22\,{\rm mV}$ & $\alpha_1=1$ & $w_1=90\,{\rm s}^{-1}$ & $z_1=90\,{\rm s}^{-1}$ \\
  $a=100\,{\rm s}^{-1}$ & $\alpha_2=0.8$ & $w_2=33\,{\rm s}^{-1}$ & $z_2=33\,{\rm s}^{-1}$ \\
  $b=50\,{\rm s}^{-1}$ & $\alpha_3=0.25$ & $V_{\rm glu}^{\rm EA}=4.5\,\mu {\rm M.s^{-1}}$ & $V_{\rm gba}^{\rm EA}=2\,\mu {\rm M.s^{-1}}$ \\
  $e_0=2.5\,{\rm s}^{-1}$ & $\alpha_4=0.25$ & $V_{\rm glu}^{\rm EN}=0.5\,\mu {\rm M.s^{-1}}$ & $K_{\rm gba}^{\rm EA}=8\,\mu {\rm M}$ \\
  $v_0=6\,{\rm mV}$ & $G=40$ & $r_g=0.9\,\mu {\rm M^{-1}}$ & $V_{\rm gba}^{\rm EN}=5\,\mu {\rm M.s^{-1}}$ \\
  $r=0.56\,{\rm mV}^{-1}$ &  & $s_g=6\,\mu {\rm M}$ & $K_{\rm gba}^{\rm EN}=24\,\mu {\rm M}$ \\
   &  & $V_{\rm cglu}=9\,\mu {\rm M.s^{-1}}$ & $V_{\rm cgba}=9\,\mu {\rm M.s^{-1}}$ \\
 \hline
 \end{tabular}
\caption{Values of the neuro-glial model parameters.}
\label{par_table}
\end{table}

System \eqref{Sys_NG_SR} is built as a feedforward coupling of the neural compartment onto the glial one. Hence, in this model, the neural compartment is not impacted by the neurotransmitter concentrations in the extracellular space. As mentioned in the introduction, these concentrations have been proven to modulate the local neuron excitability and this feedback has been identified in recent studies \cite{Araque_1998} to be an essential mechanism of several pathologies triggered by glial reuptake deficiencies. Consequently, our aim is to include such feedback in the model in order to study the effects of different astrocyte dysfunctioning on the neuronal activity.

\subsection{Glial feedback and neuro-glial mass model} \label{NG_AR}
The concentrations of neurotransmitters in a synaptic cleft act on the excitability threshold of the post-synaptic neuron. In the neuro-glial model \eqref{Sys_NG_SR} the alteration of this neural excitability threshold can be reproduced by dynamical changes in $v_P$, $v_{P'}$ and $v_I$. In the following, we describe how we model the modulation of the neuron excitability in each population by the neurotransmitter concentrations in the extracellular space basing ourselves on biological knowledge.

Extracellular concentrations of neurotransmitters have a thresholded impact on neural activity \cite{Araque_1998}. Precisely, on one hand, the concentrations must be large enough to impact significantly the neural activity. On the other hand, the postsynaptic neurons are saturated when these concentrations become to large and, consequently, the neural excitability remains bounded. It is worth noticing that quantitative experimental data of the impact of neurotransmitter concentrations on neural excitability do not exist up to now. By default, we consider sigmoidal functions to model the glial feedback on neural excitability which is a natural choice for aggregating the qualitative experimental knowledge. Yet the upcoming mathematical analysis can be easily extended to any bounded increasing functions with a unique inflection point.

We introduce three sigmoidal functions to model the components of the glial feedback: ${\rm Si_{Glup}}$ for the glutamate feedback on pyramidal cells, ${\rm Si_{Glui}}$ for the glutamate feedback on interneurons and ${\rm Si_{GABA}}$ for the GABA feedback on pyramidal cells. We parameterize these functions as follows
\begin{subequations} \label{sigm_fdb}
\begin{align}
 {\rm Si_{Glup}}({\rm [Glu]_E}) &= \frac{m_{\rm Glup}}{1+e^{r_{\rm Glup}\,(v_{\rm Glup}-{\rm [Glu]_E})}} \\
 {\rm Si_{Glui}}({\rm [Glu]_E}) &= \frac{m_{\rm Glui}}{1+e^{r_{\rm Glui}\,(v_{\rm Glui}-{\rm [Glu]_E})}} \\
 {\rm Si_{GABA}}({\rm [GABA]_E}) &= \frac{m_{\rm GABA}}{1+e^{r_{\rm GABA}\,(v_{\rm GABA}-{\rm [GABA]_E})}}
\end{align}
\end{subequations}
The parameter values used for the simulations in the following sections are given in Table \ref{par_feedback} and have been chosen to reproduce an average physiological behavior.

\begin{table}[htbp]
 \centering
 \begin{tabular}{|c|c|c|}
 \hline
 $m_{\rm Glup}=2.5$ & $m_{\rm Glui}=1$ & $m_{\rm GABA}=1$ \\
 $r_{\rm Glup}=0.15$ & $r_{\rm Glui}=0.15$ & $r_{\rm GABA}=0.12$ \\
 $v_{\rm Glup}=30$ & $v_{\rm Glui}=30$ & $v_{\rm GABA}=25$ \\
 \hline
 \end{tabular}
\caption{Parameter values of the sigmoidal function ${\rm Si_{Glui}}({\rm [Glu]_E})$, ${\rm Si_{Glup}}({\rm [Glu]_E})$ and ${\rm Si_{GABA}}({\rm [GABA]_E})$.}
\label{par_feedback}
\end{table}

We note that the fixation mechanisms of glutamate on pyramidal cells and interneurons are the same since the neurotransmitter transporters are independent on the type of neuron. Thus, only parameters $m_{\rm Glup}$ and $m_{\rm Glui}$ representing the maximal coupling gains of the glutamate-related component of the glial feedback discriminate between the coupling functions ${\rm Si_{Glup}}$ and ${\rm Si_{Glui}}$, since the synaptic sensitivities may not be the same in pyramidal cells and interneurons.

At the beginning of this subsection, we evoked that the glial feedback acts on the excitability thresholds of neurons. More specifically, if there is an excess of neurotransmitter in a synapse from a neuron ${\rm n}_1$ of population ${\rm N}_1$ to a neuron ${\rm n}_2$ of population ${\rm N}_2$, the extracellular concentration of neurotransmitter acts on the postsynaptic neuron ${\rm n}_2$ by changing its excitability threshold. In system \eqref{Sys_NG_SR} the excitability threshold of neurons, that is a parameter at the individual scale, does not appear explicitly. However, when the excitability of the population ${\rm N}_2$ neurons changes at the individual scale, the number of neurons activated in this population by a given input changes as well and thus the output of this population is also modified. Consequently, we choose to change parameter $v_{{\rm N}_2}$ in the equation corresponding to the output of population ${\rm N}_2$ since this parameter represents a modulation of the threshold of the sigmoidal function ${\rm sigm}$.

Let us now describe how we build the feedbacks on the dynamics of the neural compartment using the sigmoidal functions of the neurotransmitter concentrations introduced in \eqref{sigm_fdb}. We need to consider separately each type of synapse in the NMM, and the variables $x_0$, $x_1$ an $x_2$ for the feedbacks building. The NMM embeds five types of synaptic connections between populations:
\begin{itemize}
 \item $S_1$ from P' to P,
 \item $S_2$ from P to P',
 \item $S_3$ from P to I,
 \item $S_4$ from I to P,
 \item $S_5$ from P to itself
\end{itemize}
\noindent In the following we detail the modulation of neural intermediary variables for each kind of synapse separately, then we gather these changes to specify the coupling terms reproducing the glial feedback.
\begin{figure}[ht]
 \centering
 \includegraphics[width=0.7\textwidth]{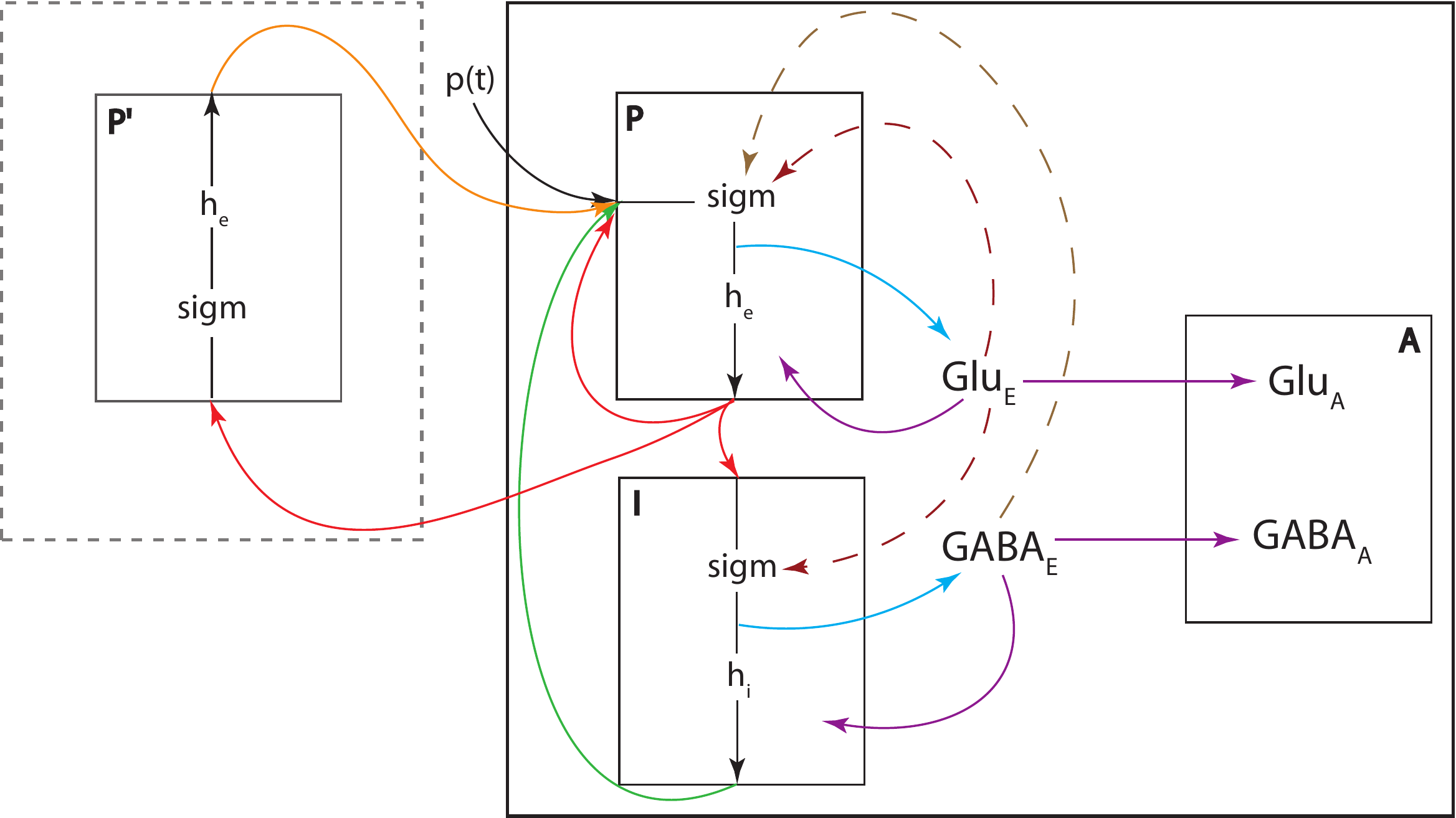}
 \caption{Neuro-glial model with glial feedback.  P and P': main and secondary populations of pyramidal cells. I: interneuron population. $p(t)$: excitatory input on population P. ${\rm [Glu]_{E}}$ and ${\rm [GABA]_{E}}$: glutamate and GABA extracellular concentrations. ${\rm [Glu]_{A}}$ and ${\rm [GABA]_{A}}$: glutamate and GABA glial concentrations. Red arrows: P$\rightarrow$P, P$\rightarrow$I and P$\rightarrow$P' couplings. Orange arrow: P'$\rightarrow$P coupling. Green arrow: I$\rightarrow$P coupling. Cyan arrows: glutamate and GABA release by populations P and I into extracellular space. Purple arrows: glial and neural reuptakes of neurotransmitters. Red dashed arrows: glutamate feedbacks on populations P and I. Brown dashed arrow: GABA feedback on population P.}
\end{figure}

In the framework of the local neuro-glial mass model, the glial feedback does not impact the synaptic connections of type $S_1$ or $S_2$. As a matter of fact, the glial compartment only takes into account neurotransmitters released locally by neurons of populations P and I, whereas population P' is non local to population P. Hence, extracellular concentrations of neurotransmitters in the vicinity of P' have no impact on the neuronal activity of P and the concentrations in the neighborhood of P and I do not influence postsynaptic neurons of population P'. In the discussion, we evoke the fact that network models based on the local model presented in this article may naturally take into account such modulation of mid-range synaptic connections. In the current study focusing on the local model, we consider the case of constant $v_{\rm P'} = v_0$.

A synaptic connection of type $S_3$ concerns the variable $x_2$. In case of glutamate excess in the extracellular space, the postsynaptic neuron is more excitable. Consequently, more neurons are activated in the population I. We model this mechanism by introducing a dependency of population $I$ excitability threshold $v_I$ on the extracellular glutamate concentration and set in equation \eqref{x2_init} :
\[
v_{\rm I} = v_0-{\rm Si_{Glui}}({\rm [Glu]_E}).
\]

On one hand, a synaptic connection of type $S_4$ is concerned by extracellular concentrations of GABA since it involves GABAergic interneurons. In case of a GABA excess in the extracellular space, the inhibition of the postsynaptic neuron is strengthened, {\it i.e.} less neurons are activated in population P which is translated in the NMM by an increase of the threshold of the sigmoidal term in the $x_0$ dynamics. On the other hand, a synaptic connection of type $S_5$ is impacted by the extracellular concentration of glutamate implying a modulation of variable $x_0$ dynamics as well. In case of an excess of glutamate in this kind of synapse, the postsynaptic neuron is more excitable. Hence, more neurons are activated in population P which can be reproduced by a decrease in the threshold parameter appearing in \eqref{x0_init}. Gathering both modulations impacting the excitability of population P, we set in equation \eqref{x0_init}
\[
 v_{\rm P} = v_0+{\rm Si_{GABA}}({\rm [GABA]_E})-{\rm Si_{Glup}}({\rm [Glu]_E}).
\]

The new neuro-glial mass model embedding the glial feedback is obtained from model \eqref{Sys_NG_SR} by considering the dynamical entries $v_{\rm I}$ and $v_{\rm P}$ mentioned above. Accordingly, the sigmoidal functions appearing in equations \eqref{x0}, \eqref{x1}, \eqref{x2}, \eqref{glune} and \eqref{gabaie} become:
\begin{align*}
 {\rm sigm}(y_1-y_2,v_{\rm P}) &= \frac{2\,e_0}{1+e^{r\,(v_0+{\rm Si_{GABA}}({\rm [GABA]_E})-{\rm Si_{Glup}}({\rm [Glu]_E})-(y_1-y_2))}} \\
 {\rm sigm}(C_1\,y_0,v_{\rm P'}) &= \frac{2\,e_0}{1+e^{r\,(v_0-C_1\,y_0)}} \\
 {\rm sigm}(C_3\,y_0,v_{\rm I}) &= \frac{2\,e_0}{1+e^{r\,(v_0-{\rm Si_{Glui}}({\rm [Glu]_E})-C_3\,y_0)}}
\end{align*}

\subsection{Effect of a GABA bolus : an illustration of the glial feedback impact}

We compare time series generated by the neuro-glial model with and without glial feedback to illustrate its impact on the model behavior (Figure \ref{Bolus_GABA}). To this aim, we mimic the same GABA bolus injection ($20$ AU) in the extracellular space at $t=30s$ with both models. For obtaining regular patterns and ease the comparison between the outputs, we consider a constant input $p(t)=p$ with a value of $p$ close to and greater than $p_{\rm SNIC}$ (Figures \ref{Bolus_GABA}, panels (a1) and (b1)). Hence, from initial time to $t=30s$, both model generate pacemaker NIS oscillations at a low frequency. Note that, even if GABA concentration is low and the corresponding sigmoidal feedback is consequently very weak, it already impacts the neural activity, which implies a difference in the spike frequency between the two LFP time series.
\begin{figure}[ht]
 \centering
 \includegraphics[width=\textwidth]{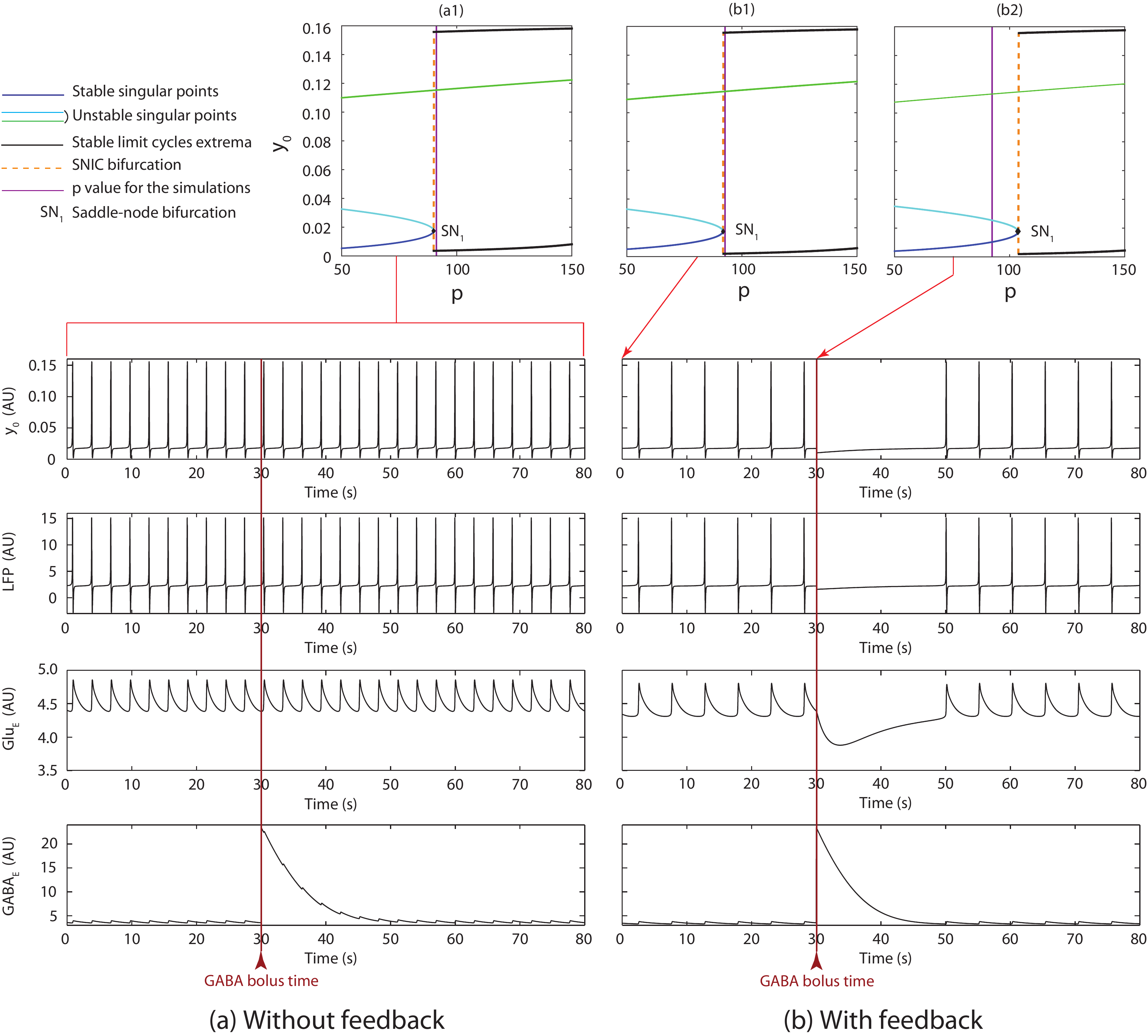}
 \caption{Bifurcation diagrams according to $p$ (top panels) computed for the model without feedback for all $t$ (a1) and for the model with feedback at $t=0s$ (b1) and $t=30s$ (b2). Time series (bottom panels) corresponding to $y_0$, LFP, extracellular concentrations of Glutamate and GABA generated by the models without feedback (left) and with feedback (right).
The purple lines on the bifurcation diagrams materialize the fixed value of input $p$. The dark red lines on the time series materialize the time of GABA bolus injection
}
 \label{Bolus_GABA}
\end{figure}

In the time series generated by the model without feedback (left panels of Figure \ref{Bolus_GABA}), the neural activity and the glutamate concentration dynamics remain unchanged (Figure \ref{Bolus_GABA}(a)) after the artificial and instantaneous increase in ${\rm [GABA]_{E}}$ that aims to mimic a GABA bolus injection. In contrast, in the model with glial feedback (right panels of Figure \ref{Bolus_GABA}), the strong increase in GABA concentration implies a break in neural activity, and thus a decrease in glutamate concentration. Once GABA concentration has become sufficiently low, neural activity starts again. The glutamate and GABA concentrations come back to their respective basal lines and oscillate under the effect of the neural spikes.

These phenomenons can be explained using bifurcation-based arguments (top panels of Figure \ref{Bolus_GABA}). In the system without feedback, the neural dynamics is entirely decoupled from the glial one. Hence, the bifurcation diagram of the NMM remains unchanged during all the simulation (Figure \ref{Bolus_GABA}(a1)).
On the other hand, in the system with feedback, the bifurcation diagram of the neural system is deformed along time, in particular the value of $p_{\rm SNIC}$ changes with the glial variables. Consequently, right after the GABA bolus, $p_{\rm SNIC}$ becomes greater than the input value $p$ (Figure \ref{Bolus_GABA}(b2)).
Moreover, we recall that $p_{\rm SNIC}$ plays the role of an activation threshold. Thus, as long as GABA concentration remains high, the neural variables are at steady state and the neural compartment remains quiescent. A direct calculation shows that the glial compartment has a single stable singular point : extracellular GABA concentration decreases towards this attractive state implying a slow decrease in $p_{\rm SNIC}$. Once, GABA concentration is low enough, $p_{\rm SNIC}$ becomes smaller than $p$, and the system oscillates again.

This analysis shows how the model with feedback can take into account changes in glutamate and GABA dynamics to modify all the dynamics of the system and illustrates the interest of embedding the glial feedback in such neuro-glial model. Our model allows us to study the effects of variations in glutamate or GABA dynamics on neural activity. In the following, we study the effects of deficiencies in the reuptake of neurotransmitters by the astrocytes both on extracellular concentrations and neural activity. For both types of deficiency (GABA and glutamate), we first describe the biologic context and mechanisms and their outcomes, then we provide a mathematical analysis of the underlying dynamical mechanisms to explain the effects that can be expected in the biological system.

\section{GABA glial deficiency} \label{GABA_deficiency}

We consider an astrocyte presenting a deficiency in its GABA transporters which implies a low capacity to reuptake the extracellular GABA. A glial cell is linked with several neurons, more specifically with several synapses. Thus the GABA in synaptic clefts linked with the defective astrocyte increases. Consequently, the concerned post-synaptic neurons receive more inhibition from the extracellular GABA and release less neurotransmitters in the following synapses. Thus, considering several defective glial cells, all neurons in a local neighborhood are affected. In summary, when glial cells present a deficiency in GABA reuptake, local neurons are more inhibited, and we expect a decrease of their activities in the corresponding simulations.

In the model, parameter $V_{\rm gba}^{\rm EA}$ stands for the maximum velocity of the GABA flux from the extracellular space to the astrocytes, {\it i.e.} in case of GABA saturation in the extracellular space. In that sense, it is related to the efficiency of the main glial transporter of GABA and modulates the glial reuptake dynamics. Consequently, to simulate a deficiency in the GABA glial reuptake, we decrease the value of this parameter. At the neuronal level we are interested in $p_{\rm SNIC}$ value according to the feedback sigmoidal functions. For sake of simplicity in this mathematical analysis, we set:
\begin{align*}
 {\rm Si_{Glui}}({\rm [Glu]_E}) & \rightarrow v_1 \\
 {\rm Si_{Glup}}({\rm [Glu]_E}) & \rightarrow \frac{m_{\rm Glup}}{m_{\rm Glui}}\,v_1 \\
 {\rm Si_{GABA}}({\rm [GABA]_E}) & \rightarrow v_2 
\end{align*}
The ranges of $v_1$ and $v_2$ are defined by the limits of ${\rm Si_{Glui}}({\rm [Glu]_E})$ and ${\rm Si_{GABA}}({\rm [GABA]_E})$ respectively:
\[
v_1 \in [0,m_{\rm Glui}] \, \text{and} \, v_2 \in [0,m_{\rm GABA}]
\]
With these new notations, the dynamical excitability thresholds $v_{\rm P}$, $v_{\rm P'}$ and $v_{\rm I}$ of populations P, P' and I become:
\begin{align*}
 v_{\rm P} &= v_0+v_2-\frac{m_{\rm Glup}}{m_{\rm Glui}}\,v_1 \\
 v_{\rm P'} &= v_0 \\
 v_{\rm I} &= v_0-v_1
\end{align*}

With these new parameters, an increase or a decrease in GABA (resp. glutamate) extracellular concentration is represented by an increase or a decrease in the value of $v_2$ (resp. $v_1$) respectively. The natural effect of a deficiency of GABA glial reuptake on neural activity is an increase in the extracellular GABA concentration. Thereby, we characterize the dependency of $p_{\rm SNIC}$ on the value of $v_2$. The assumption that $v_1$ can be kept constant is justified in the following Remark \ref{v1constant}.

\begin{proposition} \label{Prop1}
 $p_{\rm SNIC}$ is linear and increasing according to $v_2$
\end{proposition}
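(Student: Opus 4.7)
The plan is to derive an explicit expression for the singular-point curve in the $(p, y_0)$ plane and show that the $v_2$-dependence enters only as an additive constant with coefficient $a/A$. Then, since the SNIC bifurcation coincides with the saddle-node $\mathrm{SN}_1$ (as stated after Figure \ref{NIS_DB}), and since extrema of $p$ with respect to $y_0$ do not involve $v_2$, the value $p_{\mathrm{SNIC}}$ will depend affinely on $v_2$ with slope $a/A > 0$.

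First I would set $y_3' = y_4' = y_5' = 0$ and $y_3 = y_4 = y_5 = 0$ in \eqref{y_init}. Equation \eqref{y0_init} at equilibrium yields $\mathrm{sigm}(y_1-y_2, v_{\mathrm P}) = \tfrac{a}{A} y_0$; inverting the sigmoid gives
\[
 y_1 - y_2 \;=\; v_{\mathrm P} - \frac{1}{r}\log\!\left(\frac{2 e_0 A}{a\,y_0} - 1\right) \;=:\; v_{\mathrm P} + \Phi(y_0).
\]
Equation \eqref{y2_init} at equilibrium gives $y_2 = \tfrac{B C_4}{b}\,\mathrm{sigm}(C_3 y_0, v_{\mathrm I})$, which depends on $v_1$ (through $v_{\mathrm I} = v_0 - v_1$) but not on $v_2$. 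Substituting both expressions into \eqref{y1_init} and solving for $p$, one obtains
\[
 p(y_0) \;=\; \frac{a}{A}\,y_1 \;-\; C_2\,\mathrm{sigm}(C_1 y_0, v_0) \;-\; \frac{a G}{A}\,y_0
     \;=\; \frac{a}{A}\,v_{\mathrm P} \;+\; R(y_0, v_1),
\]
where
\[
 R(y_0, v_1) \;=\; \frac{a}{A}\Bigl(y_2(y_0,v_1) + \Phi(y_0)\Bigr) \;-\; C_2\,\mathrm{sigm}(C_1 y_0, v_0) \;-\; \frac{a G}{A}\,y_0
\]
collects all terms that do not involve $v_2$. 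Since $v_{\mathrm P} = v_0 + v_2 - \tfrac{m_{\mathrm{Glup}}}{m_{\mathrm{Glui}}}\,v_1$, the full dependence of $p$ on $v_2$ is the single additive term $\tfrac{a}{A}\,v_2$; that is,
\[
 p(y_0, v_1, v_2) \;=\; \frac{a}{A}\,v_2 \;+\; S(y_0, v_1),
\]
with $S(y_0, v_1)$ independent of $v_2$.

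The saddle-node bifurcations $\mathrm{SN}_1$ and $\mathrm{SN}_2$ are characterized by the vanishing of $\partial p / \partial y_0$ along the singular-point curve (the S-shape having folds at those points), and $p_{\mathrm{SNIC}} = p_{\mathrm{SN}_1}$ by the construction recalled in Section \ref{ssec_NMM}. Because $\partial p / \partial y_0 = \partial S / \partial y_0$ does not involve $v_2$, the fold abscissa $y_{\mathrm{SN}_1}$ is a function of $v_1$ alone, and therefore
\[
 p_{\mathrm{SNIC}}(v_1, v_2) \;=\; \frac{a}{A}\,v_2 \;+\; S\!\left(y_{\mathrm{SN}_1}(v_1),\, v_1\right),
\]
which is affine in $v_2$ with slope $a/A > 0$, hence linear and strictly increasing.

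The only delicate point is to confirm that the lower fold $y_{\mathrm{SN}_1}(v_1)$ indeed remains a smooth, isolated fold as $v_2$ varies, so that the implicit characterization $\partial p / \partial y_0 = 0$ consistently tracks $p_{\mathrm{SN}_1}$ without colliding with $\mathrm{SN}_2$; this follows from the fact that the shift in $v_2$ is a pure vertical translation of the curve $y_0 \mapsto p(y_0,\cdot)$ in the $(p,y_0)$ plane, which preserves the S-shape and the positions of its folds. Everything else amounts to the routine algebra above.
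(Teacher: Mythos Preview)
Your proof is correct and follows essentially the same approach as the paper: you derive the explicit expression of the singular-point curve $p=f(y_0,v_1,v_2)$, observe that $v_2$ enters only through the additive term $\tfrac{a}{A}\,v_2$, and conclude that the fold location $y_{\mathrm{SN}_1}$ (determined by $\partial f/\partial y_0=0$) is independent of $v_2$, so that $p_{\mathrm{SNIC}}$ is affine in $v_2$ with positive slope $a/A$. The paper's proof is slightly more terse (it states the formula for $f$ directly rather than deriving it from \eqref{y_init}), but the argument is identical.
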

\begin{proof}
The set of the system singular points obtained for the different values of parameter $p$ can be explicitly expressed according to $y_0$, $v_1$ and $v_2$ all other parameters being fixed. The $y_0$ components of the singular points for given values of $p$, $v_1$ and $v_2$ are characterized as solutions of
\begin{equation}
 p = f(y_0,v_1,v_2)
\end{equation}
where
\begin{align} \label{f_y0_v1_v2}
 f(y_0,v_1,v_2) &= \frac{a}{A}\,(v_0-\frac{m_{\rm Glup}}{m_{\rm Glui}}\,v_1+v_2)-\frac{a}{A\,r}{\rm ln}\left( \frac{2\,A\,e_0-a\,y_0}{a\,y_0} \right)-C_2\,{\rm sigm}(C_1\,y_0,v_0) \nonumber \\
 & \qquad +\frac{a\,B}{b\,A}\,C_4\,{\rm sigm}(C_3\,y_0,v_0-v_1)-\frac{a\,G}{A}\,y_0.
\end{align}
All the other components of a given singular point result by direct calculation from its $y_0$ component. We rewrite equation \eqref{f_y0_v1_v2} as follows
\begin{equation} \label{f_lin_v2}
 f(y_0,v_1,v_2)=\frac{a}{A}\,v_2+q(y_0,v_1)
\end{equation}
Obviously the two saddle-node bifurcation values $p_{\rm SN_1}$ and $p_{\rm SN_2}$ are local extrema of function $f(y_0,v_1,v_2)$. In particular $p_{\rm SN_1}=p_{\rm SNIC}$ is the local maximum of $f(y_0,v_1,v_2)$ and is defined as the solution of
\begin{subequations}
\label{ySNIC}
\begin{align}
 & p = f(y_0,v_1,v_2) \label{pSNIC_ySNIC} \\
 & \frac{\partial f}{\partial y_0}(y_0,v_1,v_2) = 0 \\
 & \frac{\partial^2 f}{\partial y_0^2}(y_0,v_1,v_2) \leqslant 0
\end{align}
\end{subequations}
Since $\frac{\partial f}{\partial y_0}(y_0,v_1,v_2)$ is independent on $v_2$, so is $y_{\rm SNIC}$ and it can be considered as a parameter in equation \eqref{pSNIC_ySNIC}. From \eqref{f_lin_v2} and \eqref{ySNIC} we obtain the following expression for $p_{\rm SNIC}$
\begin{equation*}
 p_{\rm SNIC} = \frac{a}{A}\,v_2+q(y_{\rm SNIC},v_1)
 \label{pSNIC_v2}
\end{equation*}
\end{proof}

Let us consider the model generating an oscillatory output with a fixed value of $p$ (($p_{\rm SNIC}<p$). If the extracellular concentration of GABA increases (e.g. by an injection of a GABA bolus as in Figure \ref{Bolus_GABA}), the value of $v_2$ increases and Proposition 1 asserts that the value of $p_{\rm SNIC}$ also increases. As already explained, the closest $p_{\rm SNIC}$ is to $p$, with $p_{\rm SNIC}<p$, the largest is the limit cycle period, thus the oscillation frequency of the outputs decreases. If $p_{\rm SNIC}$ increases enough such that $p_{\rm SNIC}>p$, the stable limit cycle of the system disappears, and the neural compartment becomes quiescent.

In the case of a deficiency of GABA glial reuptake, the extracellular concentration of GABA increases, and we can use Proposition 1 to explain the subsequent effects. For that, we use the following {\it in silico} protocol: we initialize the neuro-glial model in an oscillatory phase with a low oscillation frequency and consider $p(t)$ a Gaussian input. At $t=40s$, we turn off the GABA glial reuptake by setting $V_{\rm gba}^{\rm EA}=0$ (Figure \ref{v2_Psnic}). The result is an increase in GABA extracellular concentration implying an increase in $p_{\rm SNIC}$. As $p_{\rm SNIC}$ increases, the probability for $p(t)$ to overcome $p_{\rm SNIC}$ along the associated brownian motion decreases, and also does the oscillation frequency (Figure \ref{v2_Psnic}). Consequently, we observe a decrease in the oscillation frequency after $t>40s$. In the time series, the oscillation frequency decreases gradually during a transient ($40s<t<60s$) until reaching its minimum. This can be explained by the slow increase of GABA extracellular concentration that reaches its new baseline at $t=60s$.

\begin{remark} \label{v1constant}
A deficiency in the GABA reuptake by the astrocyte implies a decrease in the neural activity. Hence, the glutamate extracellular concentration remains close to the baseline. Consequently, the impact of the changes in $v_1$ value can be neglected and, under this approximation, Proposition \ref{Prop1} characterizes the global effect of such deficiency on the neural compartment excitability.
\end{remark}

\begin{figure}[htbp]
 \centering
 \includegraphics[width=\textwidth]{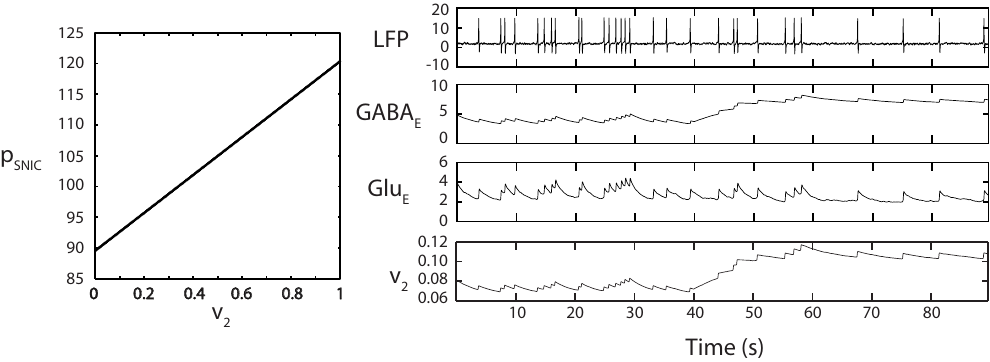}
\caption{Variation of $p_{\rm SNIC}$ value according to $v_2$ (left). Time series corresponding to ${\rm LFP}$, ${\rm [GABA]_E}$, ${\rm [Glu]_E}$ and $v_2={\rm Si_{GABA}}({\rm [GABA]_E})$ (right from top to bottom) for $p(t)$ a Gaussian variable. At $t=40$ s, the GABA glial reuptake is artificially altered by setting $V_{\rm gba}^{\rm EA}=0$.}
 \label{v2_Psnic}
\end{figure}

\section{Glutamate glial deficiency} \label{Glutamate_deficiency}

In this section, we investigate the impact of a deficiency of glutamate reuptake by the astrocytes upon the neuronal activity.
Such deficiency provoke an increase in the extracellular concentration of glutamate and, consequently, neurons in this neighborhood are more excitable. Yet, it is important to note that interneurons release more GABA implying an increase in the GABA extracellular concentration as well, and an enhancement of the inhibition of the pyramidal activity. Hence, the possible balance between glutamate-induced over-excitation and subsequent GABA-induced over-inhibition may lead to different types of response of the neuronal compartment. As previously, for studying theoretically the underlying mechanisms, we consider the NMM with two parameters $v_1$ and $v_2$ representing glutamate- and GABA-related feedbacks respectively. In the last section, we have characterized the linearity of $p_{\rm SNIC}$ according to $v_2$ for any fixed value of $v_1$. Now, let us fix the value of $v_2$ and study the variations of $p_{\rm SNIC}$ according to $v_1$.

We recall that $p_{\rm SNIC}$ is a key value of the system structure since it represents the excitability threshold of the neural compartment. It is important to note that in a specific case, the SNIC bifurcation disappears without disappearance of this excitability threshold. In this case, the supercritical Hopf bifurcation occurring for a large value of $p$ and the saddle-node bifurcation (SN$_1$) previously linked to the SNIC bifurcation, are preserved and a subcritical Hopf bifurcation appears, close to SN$_1$, giving birth to an unstable limit cycle. This limit cycle persists for a very small interval of $p$ values before it disappears through a fold bifurcation of limit cycles. We refer the reader to \cite{Garnier_2015} for more details about this bifurcation structure.
For certain parameter values of the whole model, the bifurcation structure may therefore be lost when $v_1$ varies. Computing the region of the parameter space (of high dimension) for which it remains unchanged for any $v_1$ is difficult. Yet, the previous analysis of the NMM \cite{Garnier_2015} ensures us that this region is large. Thus, in the following, we assume that $p_{\rm SNIC}$ exists and the associated saddle-node bifurcation is not degenerated for all $v_1 \in [0, m_{\rm Glui}]$, {\it i.e.} the maximal interval of values taken by $v_1 = {\rm Si_{Glui}}({\rm [Glu]_E})$, which is the case, in particular, for the parameter values given in Table \ref{par_table} and \ref{par_feedback}, that have been used for the simulations.

We recall that $p_{\rm SNIC}$ can be written as follows:
\begin{equation} \label{pSNIC_f}
 p_{\rm SNIC} = f(y_{\rm SNIC},v_1,v_2)
\end{equation}
where $f$ is given by \eqref{f_y0_v1_v2}.
Since we consider $v_2$ fixed we introduce the function
\[
g(y_0,v_1) \equiv f(y_0,v_1,v_2)|_{v_2\, \text{fixed}}
\]
As explained above, for each $v_1$, there exists a unique bifurcation value $p_{\rm SNIC}$ occurring at the non-hyperbolic (saddle-node) singular point characterized by $y_{\rm SNIC}$ which is defined by
\[
\begin{cases}
&\frac{\partial g}{\partial y_0}(y_{\rm SNIC},v_1) = 0 ,\\
& \frac{\partial^2 g}{\partial y_0^2}(y_{\rm SNIC},v_1) < 0.
 \end{cases}
\]
This value satisfies $p_{\rm SNIC} = g(y_{\rm SNIC},v_1)$. We can not find the explicit expressions of $y_{\rm SNIC}(v_1)$ and $p_{\rm SNIC}(v_1)$. Thus, for characterizing the variations of $p_{\rm SNIC}$ with $v_1$, we take advantage of the implicit definitions above and focus on localizing the extrema of $p_{\rm SNIC}(v_1)$.

\begin{proposition}
Assume that for all $v_1 \in [0,m_{\rm Glui}]$, $p_{\rm SNIC}$ exists and the associated saddle-node bifurcation is not degenerate. Then
 \begin{enumerate}
  \item if $\frac{m_{\rm Glup}}{m_{\rm Glui}} \geqslant \frac{B\,e_0\,r\,C_4}{2\,b}$, $p_{\rm SNIC}(v_1)$ has no local extremum, \vspace{0.2cm}
  \item if $\frac{m_{\rm Glup}}{m_{\rm Glui}} \in \left]0, \frac{B\,e_0\,r\,C_4}{2\,b}\right[$, $p_{\rm SNIC}(v_1)$ may admit two local extrema: a minimum at $v_1^*$ and a maximum at $v_1^{**}$. If both exist, then $v_1^* < v_1^{**}$. 
 \end{enumerate}
\end{proposition}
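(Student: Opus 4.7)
The plan is to characterize critical points of $p_{\rm SNIC}(v_1)$ via the envelope theorem and then classify them using a second-derivative test. First I will differentiate $p_{\rm SNIC}(v_1)=g(y_{\rm SNIC}(v_1),v_1)$: since $y_{\rm SNIC}$ is implicitly defined by $\partial_{y_0}g=0$, the contribution of $y_{\rm SNIC}'(v_1)$ is annihilated and one is left with
\[
p_{\rm SNIC}'(v_1)=\frac{\partial g}{\partial v_1}(y_{\rm SNIC}(v_1),v_1)=\frac{a}{A}\left(-\frac{m_{\rm Glup}}{m_{\rm Glui}}+\frac{BC_4}{b}\,\phi(S(v_1))\right),
\]
where $S(v_1):={\rm sigm}(C_3y_{\rm SNIC}(v_1),v_0-v_1)$ is the interneuron firing rate at the SNIC point and $\phi(s):=\tfrac{r}{2e_0}s(2e_0-s)$ arises from the standard sigmoid identity $\partial_{v_1}{\rm sigm}(C_3y_0,v_0-v_1)=\phi(S)$. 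Since $\phi$ is concave with maximum $\phi_{\max}=re_0/2$ at $s=e_0$, Part 1 is then immediate: if $m_{\rm Glup}/m_{\rm Glui}\geq Be_0rC_4/(2b)$ then $\tfrac{BC_4}{b}\phi(S)\leq m_{\rm Glup}/m_{\rm Glui}$ uniformly, so $p_{\rm SNIC}'\leq 0$ on $[0,m_{\rm Glui}]$ and $p_{\rm SNIC}$ admits no interior local extremum.

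For Part 2, set $k:=m_{\rm Glup}b/(m_{\rm Glui}BC_4)\in(0,re_0/2)$. The equation $\phi(s)=k$ has exactly two roots $s_-<e_0<s_+$ (with $s_-+s_+=2e_0$), so any critical point of $p_{\rm SNIC}$ must satisfy $S(v_1)\in\{s_-,s_+\}$. To decide the nature of each, I differentiate once more to get $p_{\rm SNIC}''=\tfrac{aBC_4}{bA}\phi'(S)\,\tfrac{dS}{dv_1}$. Because ${\rm sigm}(C_3y_0,v_0-v_1)$ depends only on $\Theta=v_0-v_1-C_3y_0$, one has $\partial_{y_0}S=C_3\phi(S)$ and $\partial_{v_1}S=\phi(S)$, so the chain rule along the SNIC curve yields $\tfrac{dS}{dv_1}=\phi(S)\bigl(1+C_3\,y_{\rm SNIC}'(v_1)\bigr)$. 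The implicit function theorem, applied with $\partial^2_{y_0v_1}g=\tfrac{aBC_3C_4}{bA}\phi'(S)\phi(S)$ and $\partial^2_{y_0^2}g<0$ coming from saddle-node non-degeneracy, shows that $y_{\rm SNIC}'$ has the same sign as $\phi'(S)=r(e_0-S)/e_0$. Hence at $S=s_-$ we get $y_{\rm SNIC}'>0$, $\tfrac{dS}{dv_1}>0$, and $p_{\rm SNIC}''>0$: such a critical point is a local minimum. At $S=s_+$ we have $\phi'(s_+)<0$ and, if the point is assumed to be a local maximum, then $p_{\rm SNIC}''\leq0$ forces $\tfrac{dS}{dv_1}\geq0$, i.e.\ $S$ crosses $s_+$ from below as well.

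The ordering $v_1^*<v_1^{**}$ will then follow by contradiction. Suppose $v_1^{**}<v_1^*$: just right of $v_1^{**}$, $S(v_1^{**})=s_+$ and $\tfrac{dS}{dv_1}(v_1^{**})>0$ force $S>s_+$, while just left of $v_1^*$, $S(v_1^*)=s_-$ and $\tfrac{dS}{dv_1}(v_1^*)>0$ force $S<s_-$. By continuity, $S$ must drop from above $s_+$ to below $s_-$ between them, producing additional transversal crossings of both $s_-$ and $s_+$ and hence additional extrema of $p_{\rm SNIC}$, contradicting the premise that $v_1^*$ and $v_1^{**}$ are the only ones. The hardest step I foresee is precisely this ordering argument: it hinges on non-degeneracy of both extrema and on reading the statement as ``exactly two''. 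A cleaner alternative would be to prove $S(v_1)$ globally monotone in $v_1$, but the formula $\tfrac{dS}{dv_1}=\phi(S)(1+C_3y_{\rm SNIC}'(v_1))$ yields the correct sign unconditionally only for $S<e_0$; for $S>e_0$ additional parameter-dependent bounds on $|y_{\rm SNIC}'|$ would be required, and these do not seem to be available from the NMM structure alone.
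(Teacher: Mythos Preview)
Your envelope-theorem route is essentially the paper's constrained-optimization argument in different clothing: because the saddle-node is non-degenerate, the Lagrange multiplier in the paper's setup vanishes and one is left with exactly your first-order condition $\partial_{v_1}g(y_{\rm SNIC},v_1)=0$. The paper's ``bell-shaped'' observation about $v_1\mapsto\partial_{v_1}g$ is your statement that $\phi$ is a concave parabola with maximum $re_0/2$, yielding the same threshold $Be_0rC_4/(2b)$ for Part~1. For Part~2, your direct second-derivative computation and the paper's bordered-Hessian test are equivalent (one checks $p_{\rm SNIC}''=-\det\overline{H}/(\partial^2_{y_0^2}g)^2$), and both correctly classify the critical point with $S=s_-$ as a local minimum.

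The two accounts diverge on the ordering and on the maximum. The paper solves $\partial_{v_1}g=0$ explicitly as a quadratic in $E=e^{r(v_0-v_1-C_3y_0)}$, obtaining roots $V_\pm$ and hence, at each fixed $y_0$, $v_1^*=v_0-C_3y_0-\tfrac{1}{r}\ln V_+<v_0-C_3y_0-\tfrac{1}{r}\ln V_-=v_1^{**}$; it then asserts that ``a similar argument'' via the bordered Hessian gives the local maximum at $v_1^{**}$. Your contradiction route for the ordering is genuinely different and, as you already note, leans on reading the statement as ``exactly two extrema.'' You are also right that the maximum case is the delicate one: at $S=s_+$ one has $\partial^2_{v_1^2}g<0$, so in the bordered Hessian the product $\partial^2_{y_0^2}g\cdot\partial^2_{v_1^2}g$ becomes positive and competes with $(\partial^2_{y_0v_1}g)^2$ (equivalently, $1+C_3\,y_{\rm SNIC}'$ can change sign). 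Neither your computation nor the paper's ``similar argument'' pins down that sign without additional parameter-dependent input, so your caveat identifies a point the paper leaves implicit rather than a defect specific to your approach.
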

\begin{proof}
Let us search for local extrema of function $p_{\rm SNIC}(v_1)$ which is implicitly defined by \eqref{ySNIC}. Hence, we are interested in solving the following problem of optimization under constraint :
\begin{equation}
\label{pSNIC_v1}
\min / \max \left\{g(y_0,v_1) \quad \lvert \quad \frac{\partial g}{\partial y_0}(y_0,v_1) = 0 \right\}
\end{equation}
We introduce the associated lagrangian function
\begin{equation*}
 {\sf L}(y_0,v_1,\lambda) = g(y_0,v_1)-\lambda\,\frac{\partial g}{\partial y_0}(y_0,v_1)
\end{equation*}
The necessary condition for the existence of an extremum of $g$ under the constraint $\frac{\partial g}{\partial y_0}=0$ is 
\begin{equation*}
 \overrightarrow{\nabla}{\sf L}(y_0,v_1,\lambda)=0
\end{equation*}
that is
\begin{subequations}
\label{CN_lagrangien}
\begin{align}
 & \frac{\partial g}{\partial y_0}(y_0,v_1)-\lambda\,\frac{\partial^2 g}{\partial y_0^2}(y_0,v_1) =0, \\
 & \frac{\partial g}{\partial v_1}(y_0,v_1)-\lambda\,\frac{\partial^2 g}{\partial v_1  \partial y_0}(y_0,v_1) =0, \\
 & \frac{\partial g}{\partial y_0}(y_0,v_1) = 0.
\end{align}
\end{subequations}
By assumption, the saddle-node bifurcation associated with the SNIC bifurcation is not degenerate, {\it i.e.} every solution of \eqref{pSNIC_v1} for $v_1 \in [0, m_{\rm Glui}]$ satisfies $\frac{\partial^2 g}{\partial y_0^2}(y_0,v_1) \neq 0$. Thus, system \eqref{CN_lagrangien} reads
\begin{subequations} \label{SPLag}
 \begin{align}
  & \lambda = 0, \\
  & \frac{\partial g}{\partial v_1}(y_0,v_1) = 0, \\
  & \frac{\partial g}{\partial y_0}(y_0,v_1) = 0. \label{dgdy0}
 \end{align}
\end{subequations}
Consequently, if the problem under constraint admits an extremum, this extremum satisfies $\frac{\partial g}{\partial v_1}=0$. Following the assumption that a SNIC bifurcation occurs for any value of $v_1 \in [0, m_{\rm Glui}]$, equation \eqref{dgdy0} admits a solution for any $v_1$. Hence, if the problem under constraint admits an extremum, it corresponds to a SNIC bifurcation occurring at $(y_0,v_1)$ such that
\[
 \frac{\partial g}{\partial y_0}(y_0,v_1) = 0.
\]
From \eqref{f_y0_v1_v2}, we obtain
\begin{equation*}
 \frac{\partial g}{\partial v_1}(y_0,v_1) = - \frac{a}{A}\left( \frac{m_{\rm Glup}}{m_{\rm Glui}}+\frac{B}{b}\,C_4\,\frac{\partial {\rm sigm}}{\partial v}(C_3\,y_0,v_0-v_1) \right).
\end{equation*}
Using the facts that, for any fixed values of $y_0$, function $v_1 \to \frac{\partial g}{\partial v_1}(y_0,v_1)$ is bell-shaped and its maximal value does not depend on $y_0$ (see Figure \ref{diff_sigm_v1}), one obtains that function $\frac{\partial g}{\partial v_1}(y_0,v_1)$ vanishes in $v_1$ if
\begin{equation} \label{Cond_mglu}
 \frac{m_{\rm Glup}}{m_{\rm Glui}} \in \left]0, \frac{B\,e_0\,r\,C_4}{2\,b}\right[.
\end{equation}
If $\frac{m_{\rm Glup}}{m_{\rm Glui}} \geqslant \frac{B\,e_0\,r\,C_4}{2\,b}$, function $\frac{\partial g}{\partial v_1}(y_0,v_1)$ admits no zero, which proves the first item of the Proposition 2.

Now, we assume that condition \eqref{Cond_mglu} is fulfilled and we search the values of $v_1$ satisfying $\frac{\partial g}{\partial v_1}(y_0,v_1)=0$, {\it i.e.}
\begin{equation*}
 \frac{m_{\rm Glup}}{m_{\rm Glui}}+\frac{B}{b}\,C_4\,\frac{\partial {\rm sigm}}{\partial v}(C_3\,y_0,v_0-v_1)=0.
\end{equation*}
which reads
\begin{equation} \label{diff_sigm}
 \left(e^{r\,\left( v_0-v_1-C_3\,y_0 \right)}\right)^2\,\frac{m_{\rm Glup}}{m_{\rm Glui}}+e^{r\,\left( v_0-v_1-C_3\,y_0 \right)}\,\left( 2\,\frac{m_{\rm Glup}}{m_{\rm Glui}}-2\,\frac{B}{b}e_0\,r\,C_4 \right)+\frac{m_{\rm Glup}}{m_{\rm Glui}}=0.
\end{equation}
Setting
\begin{equation} \label{V+-}
 V_\pm = \frac{B\,e_0\,r\,C_4-b\,\frac{m_{\rm Glup}}{m_{\rm Glui}}\pm \sqrt{(B\,e_0\,r\,C_4)^2-2\,B\,e_0\,r\,C_4\,\frac{m_{\rm Glup}}{m_{\rm Glui}}}}{b\,\frac{m_{\rm Glup}}{m_{\rm Glui}}}
\end{equation}
we obtain the two solutions $v^* < v^{**}$ of $\frac{\partial g}{\partial v_1}(y_0,v_1)=0$ :
\begin{eqnarray} \label{v1+-}
 v_1^* = v_0-C_3\,y_0-\frac{1}{r}\,{\rm ln}\left( V_+ \right) \\
 v_1^{**} = v_0-C_3\,y_0-\frac{1}{r}\,{\rm ln}\left( V_- \right) 
\end{eqnarray}
Note that $v_1^*$ (resp. $v_1^{**}$) corresponds to the extremum when the saddle-node SN$_1$ (resp. SN$_2$) crosses the fold of the surface $g(y_0,v_1)=p$. We consider $v_1=v_1^*$ and we note $y_0^*$ the value of $y_0$ corresponding to the SNIC connection for this value of $v_1$, {\it i.e.} the solution of 
\begin{align*}
 \frac{\partial g}{\partial y_0}(y_0,v_1^*) &= 0 \\
 \frac{\partial^2 g}{\partial y_0^2}(y_0,v_1^*) & < 0
\end{align*}
To prove that $p_{\rm SNIC}$ reaches a local minimum at $v_1=v_1^*$, we introduce the bordered Hessian matrix $\overline{H}$ associated with the lagrangian function at its singular point $(y_0,v_1,\lambda) = (y_0^*,v_1^*,0)$ (solution of system \eqref{SPLag}):
\[
 \overline{H}(y_0^*, v_1^*,0)=
 \begin{pmatrix}
  0 & \frac{\partial^2 g}{\partial y_0^2} & \frac{\partial^2 g}{\partial v_1 \partial y_0} \\
  \frac{\partial^2 g}{\partial y_0^2} & \frac{\partial^2 {\sf L}}{\partial y_0^2} & \frac{\partial^2 {\sf L}}{\partial v_1 \partial y_0} \\
  \frac{\partial^2 g}{\partial v_1 \partial y_0} & \frac{\partial^2 {\sf L}}{\partial v_1 \partial y_0} & \frac{\partial^2 {\sf L}}{\partial v_1^2}
 \end{pmatrix}_{|_{(y_0^*, v_1^*,0)}}=
 \begin{pmatrix}
  0 & \frac{\partial^2 g}{\partial y_0^2} & \frac{\partial^2 g}{\partial v_1 \partial y_0} \\
  \frac{\partial^2 g}{\partial y_0^2} & \frac{\partial^2 g}{\partial y_0^2} & \frac{\partial^2 g}{\partial v_1 \partial y_0} \\
  \frac{\partial^2 g}{\partial v_1 \partial y_0} & \frac{\partial^2 g}{\partial v_1 \partial y_0} & \frac{\partial^2 g}{\partial v_1^2}
 \end{pmatrix}_{|_{(y_0^*, v_1^*,0)}}
\]
The determinant of $\overline{H}(y_0^*,v_1^*,0)$ is given by
\[
\det{\overline{H}(y_0^*,v_1^*,0)}= -\frac{\partial^2 g}{\partial y_0^2}(y_0^*, v_1^*)\,\left[ \frac{\partial^2 g}{\partial y_0^2}(y_0^*, v_1^*)\,\frac{\partial^2 g}{\partial v_1^2}(y_0^*, v_1^*)-\left(\frac{\partial^2 g}{\partial v_1 \partial y_0}(y_0^*, v_1^*)\right)^2 \right]
\]
On one hand, the saddle-node associated with the SNIC bifurcation is not degenerate and is a local maximum of $g(y_0,v_1)$, thus $\frac{\partial^2 g}{\partial y_0^2}(y_0^*, v_1^*)<0$. On the other hand, for any $y_0$, $v_1 \to \frac{\partial g}{\partial v_1}(y_0,v_1)$ is increasing at $(y_0, v_1^*)$ (see Figure \ref{diff_sigm_v1}), thus $\frac{\partial^2 g}{\partial v_1^2}(y_0^*, v_1^*)>0$. Finally
\[
\det{\overline{H}(y_0^*,v_1^*,0)} < 0
\]
and $(y_0^*,v_1^*)$ corresponds to a local minimum of $p_{\rm SNIC}$. A similar argument proves that $(y_0^{**},v_1^{**})$ corresponds to a local maximum of $p_{\rm SNIC}$ (where $y_0^{**}$ is the $y_0$ value corresponding to ${\rm SN_2}$ bifurcation for $v_1=v_1^{**}$).
\end{proof}

The above proposition can be interpreted as a necessary condition for having a change in the sense of variations of $p_{\rm SNIC}$ when $v_1$ varies in $[0, m_{\rm glui}]$. The following result gives a sufficient condition for $v_1^*$ actually lying in $[0, m_{\rm glui}]$.

\begin{corollary}
We have $v_1^*  \in [0,m_{\rm Glui}]$ if and only if $\frac{m_{\rm Glup}}{m_{\rm Glui}} \in \left[I_1,I_2\right]$ where
\begin{align}
  I_1 &= \frac{2\,B\,e_0\,r\,C_4}{b}\,\frac{e^{r\,(v_0-C_3\,y_0^*)}}{(1+e^{r\,(v_0-C_3\,y_0^*)})^2}, \label{I1} \\
  I_2 &= \frac{2\,B\,e_0\,r\,C_4}{b}\,\frac{e^{r\,(v_0-m_{\rm Glui}-C_3\,y_0^*)}}{(1+e^{r\,(v_0-m_{\rm Glui}-C_3\,y_0^*)})^2} \label{I2}
\end{align}
\end{corollary}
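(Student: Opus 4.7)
The plan is to invert the explicit characterisation of $v_1^*$ derived in the proof of Proposition 2 and read off the values of $R := m_{\rm Glup}/m_{\rm Glui}$ that correspond to the endpoints of the target interval. Introducing the change of variable $u := e^{r(v_0 - v_1 - C_3 y_0^*)}$ recasts the critical equation \eqref{diff_sigm} as $R = \Phi(u)$ with $\Phi(u) := \frac{2 B e_0 r C_4}{b}\,\frac{u}{(1+u)^2}$, so that the quantity $V_+(R)$ of \eqref{V+-} is by construction the larger root of $\Phi(u) = R$, and the relation $v_1^* = v_0 - C_3 y_0^* - (1/r)\ln V_+(R)$ expresses $v_1^*$ as an explicit function of $R$ (with $y_0^*$ treated as a fixed parameter).

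Next I would establish a monotonicity that converts the inclusion $v_1^* \in [0, m_{\rm Glui}]$ into an interval condition on $R$. Since the quadratic underlying \eqref{diff_sigm} has unit product of roots by Vieta's formulas, $V_+ V_- = 1$, so automatically $V_+ > 1 > V_-$ and one may restrict attention to the branch $\{u > 1\}$ of $\Phi$. On this branch $\Phi$ is strictly decreasing, hence $R \mapsto V_+(R)$ is strictly decreasing; composing with $V_+ \mapsto v_0 - C_3 y_0^* - (1/r)\ln V_+$, which is also decreasing in $V_+$, yields that $R \mapsto v_1^*(R)$ is strictly \emph{increasing}. Under the parameter regime $m_{\rm Glui} \le v_0 - C_3 y_0^*$ (satisfied for the values of Tables \ref{par_table}--\ref{par_feedback}), the entire interval $v_1^* \in [0, m_{\rm Glui}]$ stays on the branch $V_+ > 1$, so the monotone bijection gives $v_1^* \in [0, m_{\rm Glui}] \Longleftrightarrow R \in [\Phi(u_0), \Phi(u_m)]$, where $u_0 := e^{r(v_0 - C_3 y_0^*)}$ and $u_m := e^{r(v_0 - m_{\rm Glui} - C_3 y_0^*)}$.

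A direct evaluation then shows $\Phi(u_0) = I_1$ and $\Phi(u_m) = I_2$ as defined in \eqref{I1}--\eqref{I2}, which closes the equivalence. The step requiring the most care is the branch identification together with the verification that $[0, m_{\rm Glui}]$ lies entirely on one side of the peak of $v_1 \mapsto \Phi(u(v_1))$: this is what guarantees that both $v_1^*$ and its inverse image $V_+$ remain in the monotone regime and that $I_1 \le I_2$. Without this geometric assumption one would need a case distinction (parallel to the $v_1^*$ versus $v_1^{**}$ discussion in Proposition 2) because an interval straddling the peak of $\Phi(u(v_1))$ would make the admissible range of $R$ not simply the interval $[I_1, I_2]$.
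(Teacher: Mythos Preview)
Your argument is correct and follows essentially the same route as the paper's: both rewrite the critical condition as $R = h(v_1^*)$ with $h(v_1) = \Phi(u(v_1))$, use the monotonicity of $h$ on $[0, m_{\rm Glui}]$, and evaluate at the endpoints to obtain $I_1 = h(0)$ and $I_2 = h(m_{\rm Glui})$. The only notable difference is that you make explicit the parameter-regime hypothesis $m_{\rm Glui} \le v_0 - C_3\,y_0^*$ (together with the branch selection $V_+ > 1$ via Vieta) needed for that monotonicity and for $I_1 \le I_2$, whereas the paper simply asserts the monotonicity by reference to Figure~\ref{diff_sigm_v1}.
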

\begin{proof}

Since $v_1^*$ satisfies $\frac{\partial g}{\partial v_1}(y_0^*,v_1^*)=0$, one obtains, from equation \eqref{diff_sigm},
\begin{equation} \label{h_v1}
 \frac{m_{\rm Glup}}{m_{\rm Glui}} = \frac{2\,B\,e_0\,r\,C_4}{b}\,\frac{e^{r\,(v_0-v_1^*-C_3\,y_0^*)}}{(1+e^{r\,(v_0-v_1^*-C_3\,y_0^*)})^2} = h(v_1^*)
\end{equation}
For any $y_0$, function $v_1 \mapsto \frac{e^{r\,(v_0-v_1-C_3\,y_0)}}{\left(1+e^{r\,(v_0-v_1-C_3\,y_0)}\right)^2}$ is strictly increasing over $[0,m_{\rm Glui}]$ and $v_1^* \in [0,m_{\rm Glui}]$ if and only if $\frac{m_{\rm Glup}}{m_{\rm Glui}}\in[I_1,I_2]$ defined by \eqref{I1} and \eqref{I2} (see Figure \ref{diff_sigm_v1}).
\begin{figure}[htbp]
 \centering
 \includegraphics[width=0.45\textwidth]{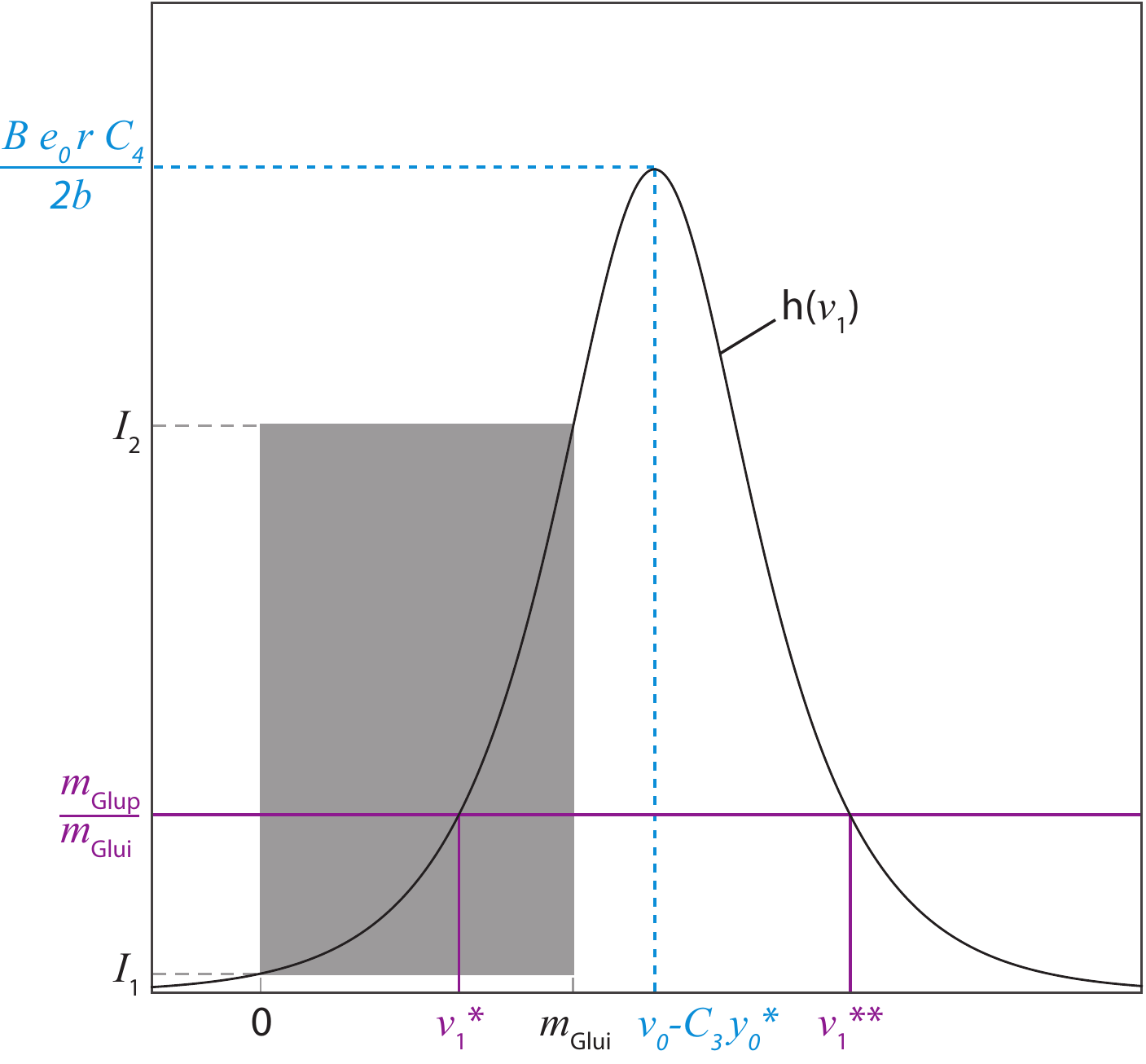}
 \caption{Graphic representation of function $h$ defined by \eqref{h_v1} and interval $[I_1,I_2]$ for which $v_1^* \in [0,m_{\rm glui}]$.}
 \label{diff_sigm_v1}
\end{figure}
\end{proof}

\noindent In conclusion, for a fixed value of $v_2$, $p_{\rm SNIC}$ reaches a local minimum at a value $v_1^* \in [0,m_{\rm Glui}]$ if and only if
\[
\frac{m_{\rm Glup}}{m_{\rm Glui}} \in \left]0, \frac{B\,e_0\,r\,C_4}{2\,b}\right[ \, \cap \, \left[I_1,I_2\right].
\]
Moreover, in section \ref{GABA_deficiency} about the GABA glial deficiency, we proved that, for a fixed value of $v_1$, $p_{\rm SNIC}$ is linear and increasing with $v_2$. Both results allow us to predict that there exist three shapes of $p_{\rm SNIC}(v_1,v_2)$ according to the value of  $\frac{m_{\rm Glup}}{m_{\rm Glui}}$.
\begin{itemize}
\item If $\frac{m_{\rm Glup}}{m_{\rm Glui}}<I_1$ then $v_1^*<0$ and $p_{\rm SNIC}$ strictly increases with $v_1$ and $v_2$. \vspace{0.2cm}
\item If $\frac{m_{\rm Glup}}{m_{\rm Glui}}>I_2$, then $v_1^*>m_{\rm Glui}$ and $p_{\rm SNIC}$ strictly decreases when $v_1$ increases (for $v_2$ fixed) and strictly increases with $v_2$ (for $v_1$ fixed). \vspace{0.2cm}
\item If $\frac{m_{\rm Glup}}{m_{\rm Glui}} \in \left[I_1,I_2\right]$, then $v_1^* \in [0,m_{\rm Glui}]$ and $p_{\rm SNIC}$ decreases when $v_1$ increases in $[0,v_1^*]$ and increases with $v_1>v_1^*$ (for $v_2$ fixed).
\end{itemize}
In the following we illustrate the three qualitative types of neural activity resulting from an astrocyte deficiency to capture glutamate using the following values :
\begin{align*}
 a) \quad \frac{m_{\rm Glup}}{m_{\rm Glui}} = 1.7 <I_1, \qquad 
 b) \quad \frac{m_{\rm Glup}}{m_{\rm Glui}} = 3.2 >I_2, \qquad 
 c) \quad \frac{m_{\rm Glup}}{m_{\rm Glui}} = 2.43 \in \left[I_1,I_2\right]
\end{align*}
For each case, we provide simulations representing the value of $p_{\rm SNIC}$ in $(v_1,v_2)$ space and time series generated by the model when the glutamate glial reuptake is altered (Figures \ref{mglup2}, \ref{mglup3} and \ref{mglup25}).

\begin{figure}[htbp]
 \centering
 \includegraphics[width=0.93\textwidth]{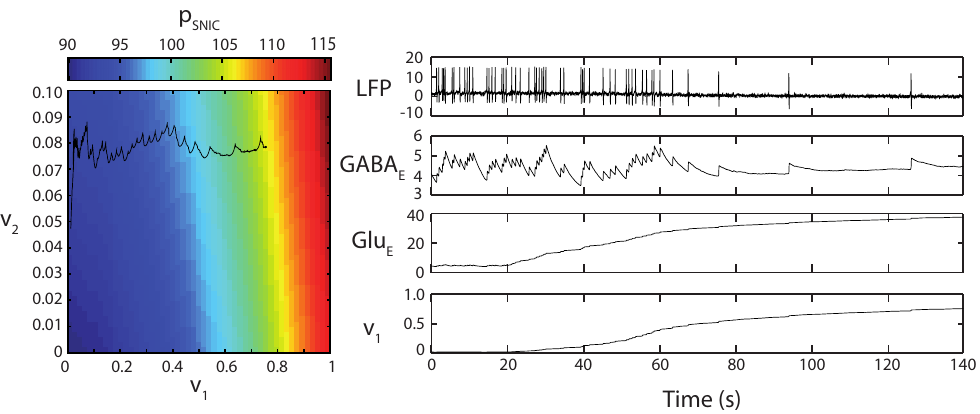}
 \caption{Colormap displaying the value of $p_{\rm SNIC}$ in $(v_1,v_2) \in [0,1] \times [0,0.1]$ space (left panel), and time series corresponding to ${\rm LFP}$, ${\rm [GABA]_E}$, ${\rm [Glu]_E}$ and $v_1={\rm Si_{Glui}}({\rm [Glu]_E})$ (right panels) obtained with $\frac{m_{\rm Glup}}{m_{\rm Glui}}=1.7$. Black curve on the colormap: trace of (${\rm Si_{Glui}}({\rm [Glu]_E})$, ${\rm Si_{GABA}}({\rm [GABA]_E})$) along the orbits. At $t=20s$, we alter the glutamate glial reuptake by setting $V_{\rm glu}^{\rm EA}=0$.}
 \label{mglup2}
\end{figure}

\begin{figure}[htbp]
 \centering
 \includegraphics[width=0.95\textwidth]{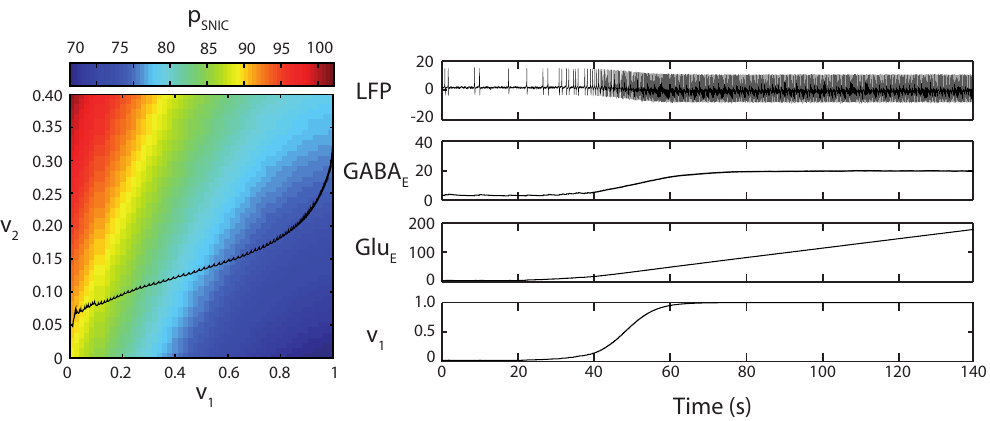}
 \caption{Colormap displaying the value of $p_{\rm SNIC}$ in $(v_1,v_2) \in [0,1] \times [0,0.4]$ space (left panel), and time series corresponding to ${\rm LFP}$, ${\rm [GABA]_E}$, ${\rm [Glu]_E}$ and $v_1={\rm Si_{Glui}}({\rm [Glu]_E})$ (right panels) obtained with $\frac{m_{\rm Glup}}{m_{\rm Glui}}=3.2$. Black curve on the colormap: trace of (${\rm Si_{Glui}}({\rm [Glu]_E})$, ${\rm Si_{GABA}}({\rm [GABA]_E})$) along the orbits. At $t=20s$, we alter the glutamate glial reuptake by setting $V_{\rm glu}^{\rm EA}=0$.}
 \label{mglup3}
\end{figure}

\begin{figure}[htbp]
 \centering
 \includegraphics[width=0.95\textwidth]{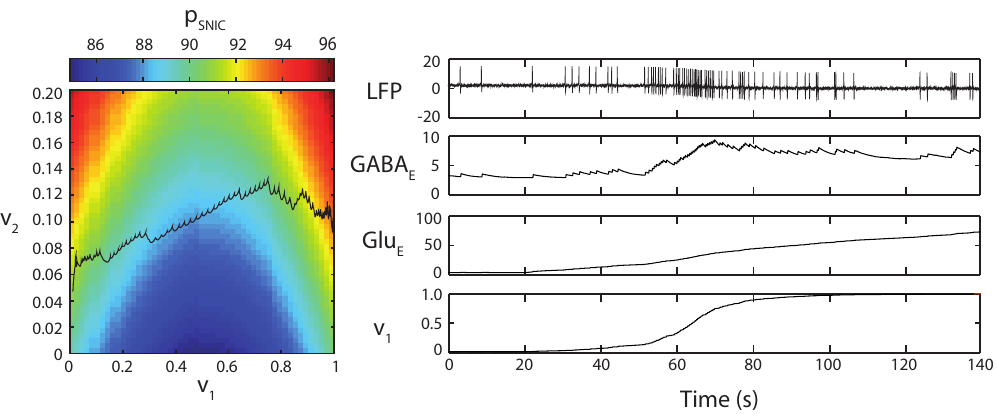}
 \caption{Colormap displaying the value of $p_{\rm SNIC}$ in $(v_1,v_2) \in [0,1] \times [0,0.2]$ space (left panel), and time series corresponding to ${\rm LFP}$, ${\rm [GABA]_E}$, ${\rm [Glu]_E}$ and $v_1={\rm Si_{Glui}}({\rm [Glu]_E})$ (right panels) obtained with $\frac{m_{\rm Glup}}{m_{\rm Glui}}=2.43$. Black curve on the colormap: trace of (${\rm Si_{Glui}}({\rm [Glu]_E})$, ${\rm Si_{GABA}}({\rm [GABA]_E})$) along the orbits. At $t=20s$, we alter the glutamate glial reuptake by setting $V_{\rm glu}^{\rm EA}=0$.}
 \label{mglup25}
\end{figure}

For $\frac{m_{\rm Glup}}{m_{\rm Glui}}=1.7<I_1$ (case $a)$, figure \ref{mglup2}), $v_1^*$ is negative and thus $p_{\rm SNIC}$ increases with $v_1$. Thereby, when we reproduce a glutamate glial deficiency triggering an increase in $v_1$ value, we observe a decrease in the oscillation frequencies in the neural activity. Moreover, reduction of the glutamate glial reuptake together with the strong decrease of neural activity, triggers an increase in the baseline of glutamate extracellular concentration. Indeed, the lack of reuptake involves the accumulation of extracellular glutamate. Since neurons are less activated, the glutamate release is lowered and the neural reuptake can stabilize the glutamate extracellular concentration.

Figure \ref{mglup3} illustrates case $b)$ ($\frac{m_{\rm Glup}}{m_{\rm Glui}}=3.2>I_2$). Since $v_1^*>1$, $p_{\rm SNIC}$ decreases as $v_1$ increases. Thereby, a glutamate glial deficiency triggers an increase of $v_1$ value, and we observe an increase in the oscillation frequencies in the neural time-series. Since the glial reuptake is reduced, glutamate accumulates in the extracellular space and the corresponding concentration baseline increases drastically.

Figure \ref{mglup25} illustrates intermediary case $c)$ ($\frac{m_{\rm Glup}}{m_{\rm Glui}}=2.43\in [I_1,I_2]$). Since $v_1^* \in ]0,1[$, $p_{\rm SNIC}$ decreases for $v_1<v_1^*$ and increases otherwise. Thereby, when we simulate a glutamate glial deficiency triggering an increase in $v_1$ value, we observe an increase in the oscillation frequencies in LFP time series, followed by a decrease. Indeed, after the alteration of the glutamate glial reuptake, the glutamate extracellular concentration increases and excites more pyramidal cells and interneurons. Subsequently, interneurons release more GABA, which implies an increase of the inhibition of pyramidal cells and a decrease in the neural activity. This sequence of events explains the delay in the regulation of the oscillation frequencies.

This kind of behavior is physiologically relevant. Indeed, it is conceivable that an excess of glutamate extracellular concentration is regulated after a delay, triggering a decrease of neural activity after the initial increase. Moreover, the frequency after the regulation delay can be greater or lower than the initial one, depending on the value of the ratio $\frac{m_{\rm Glup}}{m_{\rm Glui}}$. Note that this value can be tuned to obtain $v_1^*$ small enough and $p_{\rm SNIC}$ large enough so that the frequency after regulation is equal or lower than the one before reuptake deficiency. This property offers the possibility of fitting the model outputs to experimental data and allows us to propose hypotheses about physiological and pathological mechanisms.

\section{Conclusion and  Discussion}

In this article, we have introduced a new neuro-glial mass model built on a bilateral coupling of the NMM studied in \cite{Garnier_2015} and the glial model proposed in \cite{Blanchard_2015} focusing on GABA and glutamate concentration dynamics. The model is based on recent biological knowledge resulting from experimental data \cite{Bellone_2008, Huang_2004, Niswender_2010, Pittenger_2011} to ground the interaction between the neural and extracellular/glial compartments. Note that, as explained in the beginning of section 2.2., only basic properties of the dynamical coupling are needed to prove the qualitative dependencies studied in sections 3 and 4. By lack of experimental data in the literature, we chose sigmoidal functions for relaying the glial feedbacks in the simulations because they represent a paragon of bounded increasing functions involving a significant threshold effect. Then, using the interpretation of the aggregated -- yet biophysical -- parameters involved in this model, we have reproduced {\it in silico} various types of deficiencies in the reuptake of GABA or Glutamate by the astrocytes and studied their impact upon the neural activity. We took advantage of the bifurcation analysis performed in \cite{Garnier_2015} to characterize theoretically the dynamical mechanism leading to local neuronal hyperexcitability through a modulation of the SNIC bifurcation standing for the excitability threshold of the neural compartment. 

The first result concerns the impact of a deficiency in the reuptake of GABA by astrocytes which implies an increase in the GABA concentration in the extracellular space. We have shown (Proposition 1) that the excitability threshold $p_{\rm SNIC}$ increases linearly with the feedback term value depending on the GABA concentration in the extracellular space. Hence, such an astrocyte deficiency simulated in the model results in a decrease in the frequency of neural activity, which is consistent with the biological knowledge \cite{Brickley_2012, Luscher_2011}. The second result concerns the neuronal response to a deficiency in the reuptake of Glutamate by the astrocytes. In this case, the neural activity may either be reduced or enhanced or, alternatively, may experience a transient of high activity before stabilizing around a new activity state with a frequency close to the nominal one ({\it i.e.} before induction of astrocyte deficiency). We have characterized (Proposition 2 and Corollary 1) the relationships between parameters of the model for predicting the neuronal response. It is worth noticing that it is possible to calculate explicitly the SNIC bifurcation for the uncoupled NMM (see \cite{Garnier_2015}), but not for the bilaterally coupled model embedding the glial compartment. Hence, we have expressed the question of characterizing the variations in the neuronal excitability as an optimization problem under an equality constraint resulting from the implicit characterization of the saddle-node bifurcation.

The two model parameters $m_{\rm Glup}$ and $m_{\rm Glui}$ involved in Proposition 2 and Corollary 1 represent the maximal strengths of the Glutamate concentration impacts on the excitability of the pyramidal cell and the interneuron populations respectively. Our model-based study shows that the neural compartment may ``resist'' to the impact of glutamate excess in the extracellular space only if the value of $\frac{m_{\rm glup}}{m_{\rm glui}}$ lies in a positive interval. For small values ($\frac{m_{\rm glup}}{m_{\rm glui}}<I_1$), the neural activity frequency is lowered, as shown in Figure \ref{mglup2}, while, for high values ($\frac{m_{\rm glup}}{m_{\rm glui}}>I_2$), the activity frequency increases drastically and permanently, as shown in Figure \ref{mglup3}. For intermediary values, the neural activity recovers after a high frequency transient to a comparable mode despite the high values of extracellular Glutamate and GABA concentrations. Note that $I_1$ and $I_2$ explicitly depend on the coupling strength modulating the inhibition of the pyramidal cells activity by the interneurons ({\it i.e.} parameter $C_4$). On the other hand, only $I_2$ depends on $m_{\rm Glui}$. Hence, one might interpret this later dependency as a balance between the Glutamate-related glial feedbacks upon pyramidal cells and interneuron that the system should fulfill for being able to recover from a dysfunction in the astrocyte activity and avoid hyper-excitable behaviors.

An interesting fact should be noticed for future experimental investigation based on this study: for $\frac{m_{\rm glup}}{m_{\rm glui}}>I_1$, the astrocyte deficiency in capturing Glutamate induces, in addition to the increase in the extracellular Glutamate concentration, an increase in the GABA concentration, even greater than the one resulting from a default in GABA capture. Dynamically, this increase is necessary so that the system could reach an alternative state and recover a low frequency activity, but may not be sufficient if the glial feedback on the pyramidal cells and the interneurons is unbalanced, {\it i.e.} for to large values of $\frac{m_{\rm glup}}{m_{\rm glui}}$. The GABA increase induced by an excess of extracellular Glutamate and the corresponding neuronal response form an experimental benchmark to identify the mechanism of the transition through hyperexcitability and potential neuronal recovery.

Future theoretical works will extend the analysis of the local neuro-glial dynamical interactions by studying the model response when the neuronal compartment undergoes a drastic change in its bifurcation structure induced by a deficiency in the astrocyte compartment dynamics. As a matter of fact, for other parameter values corresponding to different inter- and intra-population connectivity strengths, the neural compartment may generate other types of time series than Noise-Induced Spiking (NIS) among those identified in \cite{Garnier_2015}. Therefore, the present study can be continued to study the dynamical mechanisms underlying successive transitions through various activity regimes. Another perspective will consist to consider two coupled neuro-glial models, the input $p(t)$ in each one being replaced by the output of the other one. The extension of the results presented in this article to such a system may provide useful methodological tools to tackle the hectic question of neuro-glial network modeling.

\subsubsection*{Acknowledgements}
This work was performed within the Labex SMART (ANR-11-LABX-65) supported by French state funds managed by the ANR within the Investissements d'Avenir program under reference ANR-11-IDEX-0004-02. We thank H. Berry for helpful comments on this work. We also thank S. Blanchard and F. Wendling for allowing us to use the glial model they submitted.

\end{document}